\documentclass[11pt,a4paper]{article}
\usepackage[utf8]{inputenc}
\usepackage[headheight=15pt,left=3cm,right=3cm,top=3cm,bottom=2.5cm]{geometry}
\usepackage{apacite}
\usepackage{amsmath}
\usepackage{amssymb,bm}
\usepackage{amsthm}
\usepackage{natbib}
\usepackage{multirow}
\usepackage{tabularx}
\usepackage{booktabs}
\usepackage{graphicx}
\usepackage{etoolbox}
\usepackage{placeins}
\usepackage{enumerate}
\usepackage{color}
\usepackage{fancyvrb}
\usepackage{bbm}
\usepackage[ruled]{algorithm2e}
\usepackage{hyperref}
\usepackage[capitalise]{cleveref}

\usepackage{dcolumn}
\usepackage{mathabx}
\usepackage{comment}
\newcolumntype{d}[1]{D{.}{.}{#1}}

\newtheorem{lemma}{Lemma}

\newtheorem{proposition}{Proposition}
\renewcommand{\top}{{\mkern-1.5mu\mathsf{T}}}

\newcommand{\E}{\mathrm{E}}
\newcommand{\Var}{\mathrm{Var}}

\newcommand{\Corr}{\mathrm{Corr}}

\usepackage{authblk}
\usepackage{color,soul}
\SetKw{Continue}{continue}
\defcitealias{UKHLS19}{University of Essex, 2019}

\title{Modelling Correlation Matrices in Multivariate Dyadic Data:
Latent Variable Models for Intergenerational Exchanges of Family Support}
\author[1]{Siliang Zhang}
\affil[1]{Key Laboratory of Advanced Theory and Application in Statistics and Data Science-MOE, School of Statistics, East China Normal University}

\author[2]{Jouni Kuha}
\author[2]{Fiona Steele}
\affil[2]{Department of Statistics, London School of Economics \& Political Science}

\begin{document}
\maketitle

\begin{abstract}
We define a model for the joint distribution of multiple continuous
latent variables which includes a model for how their correlations
depend on explanatory variables. This is motivated by and applied to
social scientific research questions in the analysis of
intergenerational help and support within families, where the
correlations describe reciprocity of help between generations and
complementarity of different kinds of help. We propose an MCMC procedure
for estimating the model which maintains the positive definiteness of
the implied correlation matrices, and describe theoretical results which
justify this approach and facilitate efficient implementation of it. The
model is applied to data from the UK Household Longitudinal Study to
analyse exchanges of practical and financial support between adult
individuals and their non-coresident parents.
\end{abstract}

\vspace*{2ex}
\emph{Keywords}:
Bayesian estimation;
Covariance matrix modelling;
Item response theory models;
Positive definite matrices;
Two-step estimation

\section{Introduction} 
\label{sec:introduction}

In contemporary low-mortality countries, population ageing has led to an
increase in the need for help and support for people with age-related
functional limitations. At the same time, the need for support may also
be increasing among younger people as a result of delayed transitions to
adulthood, unstable employment, high cost of living, and rises in
divorce and re-partnership rates \citep{lesthaeghe:14,henretta.etal:18}.
With limited public resources available to meet these demands, there is
a greater reliance on private transfers of support within families,
especially between parents and their adult children.  The main
`currencies' of such intergenerational exchanges are time (or practical
support) and money \citep[e.g.][]{grundy:05}. Another form of
intergenerational support is coresidence but its overall rate remains
low, in spite of a small increase in coresidence between young adults
and their parents \citep[e.g.][]{stone.etal:11}. Transfers of practical
and financial support between relatives living in different households
are thus a more important component of family exchanges. Understanding
the nature of these exchanges is important for anticipating which
population sub-groups may be at risk of unmet need for support or
experience a reduced capacity to provide support due to changes in their
circumstances.

Previous research highlights the importance of reciprocity (or symmetry)
in such exchanges, either contemporaneously or over the life course
\citep{hogan.etal:93,grundy:05,silverstein.etal:02}, both as a
motivating factor for providing support and because of its association
with other outcomes. For example, there is evidence that overbenefitting
(receiving more than giving) has negative consequences for older
parents' well-being \citep{davey.eggebeen:98} while balanced exchanges
are positively associated with parents' mental health \citep{litwin:04}.
The extent of reciprocity is likely to depend on individual
characteristics. For example, in a cross-national European study,
\cite{mudrazija:16} finds that net transfers from parents to adult
children follow a similar age pattern across the majority of countries,
with declining positive transfers (parents giving more than they
receive) for parents aged 50-79, becoming negative in most countries
from age 80. There is also evidence from Europe \citep{mudrazija:16} and
the U.S. \citep{hogan.etal:93} that reciprocity reflects the
geographical proximity of parents and children and gender differences in
family roles.

Another question of interest is whether practical and financial support
serve as functional substitutes or complements of each other
\citep[e.g.][]{mudrazija:16}, and how their interdependence depends on
individual characteristics. Among the factors that may play a role are
income and geographical distance where better-off adult children or
children living at a greater distance from their parents may substitute
money for time transfers to parents \citep[e.g.][]{grundy:05}.
Alternatively time and money transfers may be positively associated,
with a tendency to give or receive both or neither form of support.

Most previous substantive research has either combined practical and
financial support, or analysed them separately. The first of these
approaches implicitly assumes that the two forms of support are
indicators of a common underlying dimension and thus does not allow for
differences in their predictors, while the second ignores their
interrelationship. Moreover, most earlier work has analysed support
given separately from support received, which precludes the analysis of
reciprocity of exchanges.  Research that has investigated reciprocity
has typically modelled a joint categorical outcome for whether exchanges
are mutual or one-way \citep[e.g.][]{hogan.etal:93} or modelled the
difference between support given and support received
\citep[e.g.][]{mudrazija:16}. Both approaches consider reciprocity
directly, but then do not permit analysis of the effects of individual
characteristics on exchanges in each direction separately.
Alternatively, reciprocity can be quantified as the residual correlation
in a joint model of support given and support received
\citep{kuhaetal22,steeleetal22}. This approach can be extended to treat
financial and practical support as separate but correlated outcomes
\citep{steeleetal22}.

In this paper we develop a general joint modelling framework that is
used to simultaneously investigate predictors of financial and practical
support given and received, and predictors of the correlations among
these different types of exchange. We analyse cross-sectional
multivariate dyadic data from the UK Household Longitudinal Study
(UKHLS), which contains 16 questions (`items') about exchanges of help
on dyads formed of a survey respondent and their non-coresident
parent(s). Seven of the items relate to whether or not different kinds
of practical help are given to parents (for example, assistance with
shopping) and a further seven items indicate the forms of practical help
received from parents.  The remaining two items indicate whether
financial help is given and received.  The practical help items are
treated as multiple binary indicators of two continuous latent variables
which are modelled jointly with latent variables taken to underlie the
two indicators of financial exchanges. We also account for zero
inflation, which arises from a high proportion of respondents who report
giving or receiving none of the types of support, by including in the
model the joint distribution of two binary latent variables for the
subpopulations with excess zeros.

The model formulation builds on that of \cite{kuhaetal22}, who also
analysed family exchanges of support using UKHLS data. We extend their
model in two ways. First, we separate practical and financial help,
which they considered together. Second, and most importantly, we
introduce a model for how the correlations of tendencies to give and
receive different types of support depend on predictors (covariates).
The key advantage of this framework is that it allows us to answer
questions not only about the predictors of giving and receiving
different forms of support (the mean structure) but also about the
predictors of their correlation structure. The latter is of particular
interest here because it provides information about the symmetry of
exchanges (correlations between giving and receiving help) and
complementarity of different forms of help (correlations between giving
or receiving financial and practical help) for different population
sub-groups. The framework presented here also extends that of
\cite{steeleetal22} who separated practical and financial support in a
joint longitudinal model of bidirectional exchanges using UKHLS data,
but with the seven items for practical help given and received collapsed
into two binary outcomes, and without predictors for the correlation
structure.

Methodologically, this paper contributes to the literature on modelling
correlation or covariance matrices given covariates. We review this
literature in Section~\ref{sec:model_corr}. A key challenge here is that
the estimated matrices should be positive definite at all relevant
values of the covariates (for some approaches the constraint that the
diagonal elements of a correlation matrix should be 1 is also
challenging, so they model the covariance matrix instead). Broadly, two
approaches may be taken to deal with this
\citep{pinheiro1996unconstrained}. `Unconstrained' methods specify a
model for some transformation which ensures that the fitted matrix will
be positive definite, while `constrained' methods enforce this condition
during estimation. A disadvantage of the unconstrained approach is that
the parameters of the transformation are often not easily interpretable.
Constrained estimation, in contrast, can use interpretable models for
the covariances or correlations themselves, but it faces the challenge
of how to actually implement the constraint.

We adopt a constrained approach of estimation. We first decompose the
covariance matrix into the standard deviations and the correlation
matrix, and specify a linear model for each correlation given covariates
(in our application we do not use covariates in the model for the
standard deviations, but they could easily be included). The estimation
is carried out in the Bayesian framework, using a tailored MCMC
algorithm. Here the constraint is implemented by checking it at each
MCMC sampling step, so that the most recently sampled parameters can
only be retained if they imply a positive definite correlation matrix at
all relevant values of the covariates. This builds on methods proposed
previously without covariates \citep{johnbar2000,
wongEfficientEstimationCovariance2003}, which we extend to models that
include individual-level covariates for the correlations. Theoretical
results on properties of correlation matrices which justify this
approach and an efficient implementation of it are described in Section
\ref{sub:Some Useful Properties}.

In the rest of the paper, the UKHLS data are introduced in Section
\ref{sec:data_on_help_exchanges}, and the specification of the joint
model is described in Section \ref{sec:latent_variable_model}. Section
\ref{sec:model_corr} reviews previous literature on modelling covariance
and correlation matrices, and Section \ref{sub:Some Useful Properties}
and \ref{app:proofs_of_lemmas} give the theoretical results that provide
the basis of our estimation of the model for the correlations.
Estimation of the joint model is described in Section
\ref{sec:estimation_of_the_latent_variable_model} and \ref{app:mcmc}.
Results of the  analysis of intergenerational exchanges of family
support are then described in
Section~\ref{sec:analysis_of_the_help_exchanges}, and a concluding
discussion is given in Section \ref{sec:conclusion}.

\section{Data} 
\label{sec:data_on_help_exchanges}

We use data from the Understanding Society survey, also known as the UK
Household Longitudinal Study (UKHLS) \citepalias{UKHLS19}. This is a
long-standing household panel survey. Our analysis does not make use of
its longitudinal features, but carries out a cross-sectional analysis of
data from one wave of the survey, collected in 2017--19 (wave 9 of
UKHLS). This included the `family network' module which collected
information on exchanges of help with relatives living outside a
respondent's household.

Respondents who had at least one non-coresident parent were asked
whether they `nowadays' `regularly or frequently' gave each of eight
types of help to their parent(s): lifts in a car; help with shopping;
providing or cooking meals; help with basic personal needs; washing,
ironing or cleaning; personal affairs such as paying bills or writing
letters; decorating, gardening or house repairs; or financial help.
These items are dichotomous, with the response options `Yes' and `No'.
The same questions were asked about receipt of support from parents, but
with `personal needs' replaced by `help with childcare'. In the analysis
that follows we will distinguish between financial help (measured by a
single item for support in each direction) and practical help (measured
by the remaining seven items). Where a respondent had both biological
and step/adoptive parents alive, the respondents were asked to report on
the ones that they had most contact with. Although respondents were
asked about giving parents a lift in their car `if they have one', the
recorded variable had no missing values for this item. We therefore used
other survey information to set this item to missing for respondents who
did not have access to a car. Similarly, the childcare item was coded as
missing for respondents who did not have coresident dependent children
aged 16 or under. For the item on receiving lifts from parents, we do
not have information on whether the parents have access to a car, so
responses of `No' to this item will include also cases where they do
not.

We consider as covariates a range of individual and household
demographic and socioeconomic characteristics that aim to capture an
adult child's capacity to give help to their parents and their potential
need for support from their parents. Most variables in the survey refer
to the respondent (the child in our case), as less information was
collected on non-coresident relatives, but we also include a small set
of characteristics of the parents that are indicative of their need and
capacity for support similarly. The following respondent characteristics
were included: age, gender, whether they have a coresident partner,
indicators of the presence and age of their youngest biological or
adopted coresident children, the number of siblings (as a measure of
both alternative sources of support for parents and competition for the
receipt of parental support), whether they have a long-term illness that
limits their daily activities, employment status (classified as employed
or non-employed [unemployed or economically inactive]), education (up to
secondary school only, or post-secondary qualifications), household
tenure (home-owner or social/private renter), and household income
(equivalised, adjusted for inflation using the 2019 Consumer Price
Index, and log transformed). The parental characteristics included were
the age of the oldest living parent and whether either parent lives
alone.  We also include the travel time to the nearest parent,
dichotomized as 1 hour or less vs.\ more than 1 hour.

The analysis sample was first restricted to the 15,825 respondents aged
18 or over who had at least one non-coresident parent but no coresident
parent. We excluded respondents whose nearest parent lived or worked
abroad (1830 of them), because the nature of their exchanges is likely
to differ from parents based in the UK, and then also omitted 1792
respondents who had missing data on any covariate or on all the help
items.  The final sample size for analysis is $n=12,203$. Because of the
design of UKHLS, the sample can include some respondents who are
siblings to each other. However, preliminary analysis indicated that
their number was very small for our analysis sample, so we ignore this
feature and treat all the respondents as independent of each other.
Table \ref{t_items} shows the percentages of positive response on each
of the help items for the analysis sample, and \cref{t_covariates} shows
descriptive statistics for the covariates.

\begin{table}[!htbp]
\centering
\caption{Percentage of respondents who reported giving help to their parents and
receiving help from the parents, by item.}
\label{t_items}
\begin{tabular}{lrr}
\hline
 & Help given & Help received \\
Item & to parents & from parents \\
\hline
Practical help (7 items):                                &       & \\
\hspace*{1em}Lifts in car                            & 31.1  & 11.8 \\
\hspace*{1em}Shopping                              & 21.5  & 7.3 \\
\hspace*{1em}Providing or cooking meals              & 11.9  & 12.0 \\
\hspace*{1em}Basic personal needs (to parents only)            & 3.7   & -- \\
\hspace*{1em}Looking after children (from parents only)              & --    & 39.0 \\
\hspace*{1em}Washing, ironing or cleaning        & 7.1   & 4.9 \\
\hspace*{1em}Personal affairs             & 17.1  & 2.0 \\
\hspace*{1em}Decorating, gardening or house repairs     & 17.7  & 7.3 \\[.5ex]
Financial help (1 item)                                 & 6.0   & 12.7
\\[2ex]
\emph{At least one of the seven kinds of practical help:} &43.2 & 33.4 \\
\emph{At least one of any kind of practical or financial help:} &44.4 & 38.2 \\
\hline
\multicolumn{3}{p{1.0\textwidth}}{\footnotesize{Data from UKHLS, 2017-19 (Wave 9).
The overall sample size is $n=12,203$. The percentages for the
individual items are based on observed samples for them, excluding cases with missing data.
The item on giving lifts to parents is missing for the 17.1\% of respondents who have no access to a car, and
the item on childcare is missing for the 54.1\% respondents who have no co-resident dependent children.}}
\end{tabular}
\end{table}

\begin{table}[!htbp]
\centering
\caption{Descriptive statistics of the covariates used in the analysis.}
\label{t_covariates}

\vspace*{1ex}
\begin{tabular}{lrr}
\hline
Variable &  $n$ & \% \\
\hline
\textbf{Respondent (child) characteristics:}         &           & \\
Age (years)                                         & Mean=43.7 & SD=11.4 \\[.5ex]
Gender                                              &           & \\
~~Female                                            & 7060      & 57.9 \\
~~Male                                              & 5143      & 42.1 \\[.5ex]
Partnership status                                  &           & \\
~~Partnered                                         & 9373      & 76.8 \\
~~Single                                            & 2830      & 23.2 \\[.5ex]
Age of youngest coresident child                        &           & \\
~~No children                  & 5002      & 41.0  \\
~~$0-1$ years                   & 910       & 7.5 \\
~~$2-4$ years                  & 1231      & 10.1 \\
~~$5-10$ years                 & 1910      & 15.7 \\
~~$11-16$ years                & 1548      & 12.7 \\
~~$17-$ years                  & 1602      & 13.1\\[.5ex]
Number of siblings             &           &\\
~~None                         & 1235      & 10.1 \\
~~1                            & 4325      & 35.4 \\
~~2 or more                           & 6643      & 54.4\\[.5ex]
Longstanding illness          &           &\\
~~Yes                          & 1533      & 12.6 \\
~~No                           & 10670     & 87.4\\[.5ex]
Employment status                                   &           & \\
~~Employed                                          &  9688     & 79.4 \\
~~Not employed                                       &  2515     & 20.6 \\
Education (highest qualification)                                           &           &\\
~~Secondary or less                          &  6024     & 49.4\\
~~Post-secondary                          &  6179     & 50.6\\[.5ex]
Household tenure                    &           &\\
~~Own home outright or with mortgage & 8817 & 72.3\\
~~Other (private or social renter)                    &  3386     & 27.7\\ [.5ex]
Logarithm of household equivalised income           & Mean=9.9  &
SD=0.79 \\ [8pt]
\textbf{Parent characteristics:}                     &           &
\\
Age of the oldest living parent (years)                    & Mean=72.1 & SD=11.2 \\[.5ex]
At least one parent lives alone                     &           & \\
~~Yes                                               & 4641      & 38.0 \\
~~No                                                & 7562      & 62.0 \\[8pt]
\textbf{Child--parent characteristics:} && \\
Travel time to the nearest parent                       &           & \\
~~1 hour or less                                    & 8851      & 72.5 \\
~~More than 1 hour                                  & 3352      & 27.5 \\
\hline
\multicolumn{3}{p{.8\textwidth}}{\footnotesize{Data from UKHLS, 2017-19
(Wave 9). The sample size for all covariates is $n=12,203$.}}
\end{tabular}
\end{table}

\clearpage

\section{Latent variable model for multivariate dyadic data} 
\label{sec:latent_variable_model}

Here we define the joint model that will be used to analyse the
multivariate dyadic data that were described in
\cref{sec:data_on_help_exchanges}. The model specification is broadly
similar to that of \cite{kuhaetal22}, but with two extensions. First,
tendencies to give and receive financial and practical help are
represented by separate latent variables, so that the model includes
four rather than two such variables for each respondent. Second, the
correlations between the latent variables are also modelled as functions
of covariates.

Let $(\mathbf{X}_i, \mathbf{Y}_{Gi}, \mathbf{Y}_{Ri})$ be observed data
for a sample of dyads $i=1,\ldots,n$,  where $\mathbf{X}_i$ is a
$Q\times 1$ vector of covariates (including a constant term 1), and
$\mathbf{Y}_{Gi}=(\mathbf{Y}_{GPi}^\top,Y_{GFi})^{\top}$ and
$\mathbf{Y}_{Ri}=(\mathbf{Y}_{RPi}^\top,Y_{RFi})^{\top}$ are
$(J+1)\times 1$ vectors of binary indicator variables (\textit{items}).
In our application, the dyads are those between a survey respondent and
his or her non-coresident parents, $\mathbf{Y}_{GPi}=(
Y_{GPi1},\dots,Y_{GPiJ})^{\top}$ are the respondent's answers to $J=7$
items on different types of \emph{practical help given} to their
parents, $\mathbf{Y}_{RPi}=( Y_{RPi1},\dots,Y_{RPiJ})^{\top}$ are the
items on \emph{practical help received} from the parents, and $Y_{GFi}$
and $Y_{RFi}$ are the single items on \emph{financial help given} and
\emph{financial help received} respectively. Each item is coded 1 if
that kind of help is given or received, and 0 if not. In other
applications, $\mathbf{Y}_{Gi}$ and $\mathbf{Y}_{Ri}$ could be of
different lengths and $Y_{GFi}$ and $Y_{RFi}$ could also be vectors of
multiple indicators, with straightforward modifications of the
specifications below.

\subsection{Measurement model for the observed items given latent variables}
\label{ss_models_measurement}

The items in $\mathbf{Y}_{GPi}$, $\mathbf{Y}_{RPi}$, $Y_{GFi}$ and
$Y_{RFi}$ are regarded as measures of continuous latent variables
$\eta_{GPi}$, $\eta_{RPi}$, $\eta_{GFi}$ and $\eta_{RFi}$ respectively.
Here we interpret $\eta_{GPi}$ and $\eta_{RPi}$ as an individual's
underlying tendencies to give and to receive practical help
respectively, and $\eta_{GFi}$ and $\eta_{RFi}$ similarly as tendencies
to give and receive financial help.

The data that we analyse have a large number of responses where all the
items in $\mathbf{Y}_{Gi}$ or $\mathbf{Y}_{Ri}$ are zero (no help given
or received; see \cref{t_items}). The proportions of these all-zero
responses may be higher than can be well accounted for by standard latent
variable models given the continuous latent variables alone. To allow
for this multivariate zero inflation, the model also includes two binary latent class
variables $\xi_{Gi}$ and $\xi_{Ri}$, for each of which one class represents
individuals who are certain not to give (for $\xi_{Gi}$) or receive (for
$\xi_{Ri}$) any kind of help. For giving help, the \textit{measurement
model} for the observed responses $\mathbf{Y}_{Gi}$
given the latent variables $(\eta_{GPi},\eta_{GFi},\xi_{Gi})$
is then specified by
\begin{eqnarray}
  p(\mathbf{Y}_{Gi}=\boldsymbol{0}| \xi_{Gi}=0,
  \eta_{GPi},\eta_{GFi};\boldsymbol{\phi}_G) &=&
  p(\mathbf{Y}_{Gi}=\boldsymbol{0}| \xi_{Gi}=0)
  =
  1\quad\text{and}\label{eq:meas1} \\
  p(\mathbf{Y}_{Gi}| \xi_{Gi}=1,
  \eta_{GPi},\eta_{GFi};\boldsymbol{\phi}_G) &=&
  \prod_{j=1}^J p(Y_{GPij}| \xi_{Gi}=1, \eta_{GPi};\boldsymbol{\phi}_G)
  \nonumber \\
   && \hspace*{3em}\times \;p(Y_{GFi}| \xi_{Gi}=1,\eta_{GFi}),
   \label{eq:meas2}
\end{eqnarray}
where $p(\cdot|\cdot)$ denotes a conditional distribution and
$\boldsymbol{\phi}_G$ are measurement parameters. When $\xi_{Gi}=0,$
respondent $i$ is thus certain to answer `No' to all items related to
giving help. When $\xi_{Gi}=1,$ the probabilities of responses to
$Y_{GPij}$ are determined by the continuous latent variable $\eta_{GPi}$
and the response to $Y_{GFi}$ is determined by $\eta_{GFi}$. Items $Y_{GPij}$ ($j=1,\dots,J$) are
assumed to be conditionally independent of each other given
$\eta_{GPi}$. If any items in $\mathbf{Y}_{Gi}$ are missing for
respondent $i$, they are omitted from the product in (\ref{eq:meas2}).
The measurement models for the individual items are specified as
\begin{eqnarray}
    p(Y_{GPij}=1| \xi_{Gi}=1, \eta_{GPi};\boldsymbol{\phi}_G) &=&
    \Phi(\tau_{GPj} + \lambda_{GPj}\, \eta_{GPi})
    \hspace*{.5em}\text{ for  }j=1,\dots,J,\hspace*{.5em}\text{and}
\label{eq:probit_linkP}
    \\
    p(Y_{GFi}=1| \xi_{Gi}=1,\eta_{GFi}) &=&
         \mathbbm{1}(\eta_{GFi}>0),
\label{eq:probit_linkF}
\end{eqnarray}
where $\Phi(\cdot)$ is the cumulative distribution function of the
standard normal distribution, $\mathbbm{1}(\cdot)$ is the indicator
function, $\tau_{GPj}$ and $\lambda_{GPj}$ are parameters, and we fix
$\tau_{GP1}=0$ and $\lambda_{GP1}=1$ for identification of the scale of
$\eta_{GPi}$. Here (\ref{eq:probit_linkP}) is a standard latent-variable
(item response theory) model for binary items, with probit measurement
models, and (\ref{eq:probit_linkF}), combined with the normal
distribution of $\eta_{GFi}$ defined below, is a latent-variable
formulation of a probit model for the single item $Y_{GFi}$. Thus
$\boldsymbol{\phi}_G =
(\tau_{GP2},\ldots,\tau_{GPJ},\lambda_{GP2},\ldots,\lambda_{GPJ})^{\top}$.
The measurement model for receiving help $\mathbf{Y}_{Ri}$ given
$(\eta_{RPi},\eta_{RFi},\xi_{Ri})$ is defined analogously to
(\ref{eq:probit_linkP})--(\ref{eq:probit_linkF}), with parameters
$\boldsymbol{\phi}_{R}$, and $\mathbf{Y}_{Gi}$ and $\mathbf{Y}_{Ri}$ are
assumed to be conditionally independent of each other given the latent
variables. Let
$\boldsymbol{\phi}=(\boldsymbol{\phi}_{G}^{\top},\boldsymbol{\phi}_{R}^{\top})^{\top}$.

\subsection{Structural model for the latent variables given covariates}
\label{ss_models_structural}

Let
$\boldsymbol{\eta}_{i}=(\eta_{GPi},\eta_{RPi},\eta_{GFi},\eta_{RFi})^{\top}$
and $\boldsymbol{\xi}_{i}=(\xi_{Gi},\xi_{Ri})^{\top}$.
Their conditional distribution
$p(\boldsymbol{\eta}_{i},\boldsymbol{\xi}_{i}|\mathbf{X}_{i};\boldsymbol{\psi})
=p(\boldsymbol{\eta}_{i}|\mathbf{X}_{i};\boldsymbol{\psi}_{\eta})\,
p(\boldsymbol{\xi}_{i}|\mathbf{X}_{i};\boldsymbol{\psi}_{\xi})$ is the
\emph{structural model} for the latent variables given the covariates. Here
$\boldsymbol{\eta}_{i}$ and $\boldsymbol{\xi}_{i}$ are taken to be
conditionally independent, and
$\boldsymbol{\psi}=(\boldsymbol{\psi}_{\eta}^{\top},\boldsymbol{\psi}_{\xi}^{\top})^{\top}$
are parameters. The distribution of the latent class
variables $\boldsymbol{\xi}_{i}$ is specified as multinomial, with
probabilities \begin{align}\label{eq:xi_dist}
\text{log}&\left[\frac{\pi_{k_1k_2}(\mathbf{X}_i)}{\pi_{00}(\mathbf{X}_i)}\right]
= \boldsymbol\gamma_{k_1k_2}^\top\mathbf{X}_i, \end{align} where
$\pi_{k_1k_2}(\mathbf{X}_i) = p(\xi_{Gi}=k_1,\xi_{Ri}=k_2|
\mathbf{X}_i; \boldsymbol{\psi}_{\xi})$ for $k_1,k_2=0,1$ and
$\boldsymbol{\gamma}_{00}=\boldsymbol{0}$, so that
$\boldsymbol{\psi}_{\xi}=( \boldsymbol{\gamma}_{01}^{\top},
\boldsymbol{\gamma}_{10}^{\top},
\boldsymbol{\gamma}_{11}^{\top})^{\top}$.

The structural model for the continuous helping tendencies
$\boldsymbol{\eta}_{i}$ given the covariates $\mathbf{X}_{i}$ is the
main focus of substantive interest in our analysis. Here $\boldsymbol{\eta}_{i}\sim N(\boldsymbol{\mu}_{i},
\boldsymbol{\Sigma}_{i})$ is taken to follow a four-variate normal distribution
with covariance matrix $\boldsymbol{\Sigma}_{i}$ and
mean vector
\begin{equation}
\boldsymbol{\mu}_{i}=\E(\boldsymbol{\eta}_{i}|\mathbf{X}_{i};\boldsymbol{\beta})
=
\begin{bmatrix}
\boldsymbol{\beta}_{GP}^{\top}\,\mathbf{X}_{i}\\
\boldsymbol{\beta}_{RP}^{\top}\,\mathbf{X}_{i}\\
\boldsymbol{\beta}_{GF}^{\top}\,\mathbf{X}_{i}\\
\boldsymbol{\beta}_{RF}^{\top}\,\mathbf{X}_{i}
\end{bmatrix}
= \boldsymbol{\beta}^{\top}\mathbf{X}_i
\label{eta_mean}
\end{equation}
where $\boldsymbol{\beta}=[\boldsymbol{\beta}_{GP},\,
 \boldsymbol{\beta}_{RP},\,\boldsymbol{\beta}_{GF},\,
 \boldsymbol{\beta}_{RF}]$ is a $Q\times 4$ matrix of coefficients.
In other words, (\ref{eta_mean}) specifies separate linear models for
the means of each element of $\boldsymbol{\eta}_{i}$. For the covariance
matrix, we first decompose it as
\begin{equation}
\boldsymbol{\Sigma}_{i}
=\text{cov}(\boldsymbol{\eta}_{i}|\mathbf{X}_{i};\boldsymbol{\alpha},\boldsymbol{\sigma})=
\mathbf{S}_{i}\, \mathbf{R}_{i}\, \mathbf{S}_{i},
\label{eta_var}
\end{equation}
where $\boldsymbol{\alpha}$ are parameters of the correlation matrix
$\mathbf{R}_{i}$ and
$\boldsymbol{\sigma}=(\sigma_{GP},\sigma_{RP})^{\top}$ are parameters of
$\mathbf{S}_{i}=\text{diag}(\sigma_{GP},\sigma_{RP}, 1, 1)$, a diagonal
matrix of standard deviations where those of $\eta_{GFi}$ and
$\eta_{RFi}$ are fixed at 1 to identify the measurement model
(\ref{eq:probit_linkF}) for $\eta_{GFi}$ and the corresponding model for
$\eta_{RFi}$. Here $\boldsymbol{\sigma}$ do not depend on the covariates
(and thus we could write $\mathbf{S}_{i}=\mathbf{S}$), but this
extension could also be included.

For the correlation matrix, we consider the specification
\begin{equation}
\mathbf{R}_{i}=
\mathbf{R}(\mathbf{X}_i;\boldsymbol\alpha) =
\begin{bmatrix}
    1 & & &\\
    \rho_{1i} & 1 & &\\
    \rho_{2i} &
    \rho_{4i} & 1 &\\
    \rho_{3i} &
    \rho_{5i} &
    \rho_{6i} & 1\\
  \end{bmatrix}
=\begin{bmatrix}
    1 & & &\\
    \rho(\mathbf{X}_i;\boldsymbol\alpha_1) & 1 & &\\
    \rho(\mathbf{X}_i;\boldsymbol\alpha_2) &
    \rho(\mathbf{X}_i;\boldsymbol\alpha_4) & 1 &\\
    \rho(\mathbf{X}_i;\boldsymbol\alpha_3) &
    \rho(\mathbf{X}_i;\boldsymbol\alpha_5) &
    \rho(\mathbf{X}_i;\boldsymbol\alpha_6) & 1\\
  \end{bmatrix},
\label{eta_corr}
\end{equation}
where only the lower triangular part is shown and the $L=6$ distinct
correlations are numbered as shown in (\ref{eta_corr}).
We specify separate linear models
$\rho_{li} = \rho(\mathbf{X}_i;\boldsymbol{\alpha}_l)=\boldsymbol{\alpha}_{l}^{\top}\mathbf{X}_{i}$ for each
$l=1,\dots,L$, i.e.\
\begin{equation}
\boldsymbol{\rho}_{i} = \boldsymbol{\alpha}^{\top}\mathbf{X}_{i}
\label{eta_corr_model}
\end{equation}
where $\boldsymbol{\rho}_{i}=(\rho_{1i},\dots,\rho_{Li})^{\top}$ and
$\boldsymbol{\alpha}=[\boldsymbol{\alpha}_{1}, \dots,
\boldsymbol{\alpha}_{L}]$ is a matrix of coefficients. Note that some
variables in $\mathbf{X}_{i}$ may be included in only one of the models
(\ref{eta_mean}) and (\ref{eta_corr_model}); if so, some of the
corresponding elements of $\boldsymbol{\beta}$ or $\boldsymbol{\alpha}$
are set to zero. The motivation and interpretation of this choice of
model for the correlation matrix is discussed further in Section
\ref{sec:model_corr}. The full set of parameters of the structural model
for $\boldsymbol{\eta}_{i}$ is thus
$\boldsymbol{\psi}_{\eta}=(\text{vec}(\boldsymbol{\beta})^{\top},
\boldsymbol{\sigma}^{\top},\text{vec}(\boldsymbol{\alpha})^{\top})^{\top}$,
where $\text{vec}(\cdot)$ denotes the vectorization of a matrix.

Let $\mathbf{Y}= [\mathbf{Y}_1,\ldots,\mathbf{Y}_n]^\top$
denote all the observed data on the items, where $\mathbf{Y}_i =
(\mathbf{Y}_{Gi}^\top,\mathbf{Y}_{Ri}^\top)^{\top}$, and
$\mathbf{X}=[\mathbf{X}_1,\ldots,\mathbf{X}_n]^\top$ the
data on the covariates. Define $G_i =
\mathbbm{1}(\mathbf{Y}_{Gi}\neq\boldsymbol{0})$ and $R_i =
\mathbbm{1}(\mathbf{Y}_{Ri}\neq\boldsymbol{0})$, the indicators for whether
responses on giving and on receiving help are not all zero for respondent $i$.
Assuming the observations for different respondents to be
independent, the log likelihood function of the model is
\begin{align*}
  &\log p(\mathbf{Y}|\mathbf{X}; \boldsymbol{\phi},\boldsymbol{\psi})\\
  &= \sum_{i=1}^N
  \log\Big\{\pi_{11}(\mathbf{X}_i;\boldsymbol{\psi}_\xi)
  \left[\int p(\mathbf{Y}_{Gi}|\xi_{Gi}=1,\eta_{GPi},\eta_{GFi};\boldsymbol{\phi}_G)
  p(\mathbf{Y}_{Ri}|\xi_{Ri}=1,\eta_{RPi},\eta_{RFi};\boldsymbol{\phi}_R)\right.\\
  &\qquad
  \hspace*{10em}\left.\phantom{\int}\times
  p(\boldsymbol{\eta}_{i}|\mathbf{X}_{i};\boldsymbol{\psi}_\eta)
  \;d\eta_{GPi}\,d\eta_{RPi}\,d\eta_{GFi}\,d\eta_{RFi}\right]\\
  &\quad + (1-R_i)\,\pi_{10}(\mathbf{X}_i;\boldsymbol\psi_\xi)\,\left[\int
  p(\mathbf{Y}_{Gi}|\xi_{Gi}=1,\eta_{GPi},\eta_{GFi};\boldsymbol{\phi}_G)\right.\\
  &\qquad
  \hspace*{10em}\left.\phantom{\int}\times
  p(\eta_{GPi},\eta_{GFi}|\mathbf{X}_{i};\boldsymbol{\psi}_\eta)\;d\eta_{GPi}\,d\eta_{GFi}\right]\\
  &\quad + (1-G_i)\,\pi_{01}(\mathbf{X}_i;\boldsymbol\psi_\xi)\,
  \left[\int
  p(\mathbf{Y}_{Ri}|\xi_{Ri}=1,\eta_{RPi},\eta_{RFi};\boldsymbol{\phi}_R)\right.\\
  &\qquad
  \hspace*{10em}\left.\phantom{\int}\times
  p(\eta_{RPi},\eta_{RFi}|\mathbf{X}_{i};\boldsymbol{\psi}_\eta)\;d\eta_{RPi}\,d\eta_{RFi}\right]\\
  &\quad + (1-G_i)(1-R_i)\,\pi_{00}(\mathbf{X}_i;\boldsymbol\psi_\xi)\Big\}.
\end{align*}
Estimation of this model is described in Section
\ref{sec:estimation_of_the_latent_variable_model}, after
some further discussion of questions related to the model for the correlations.

\section{Modelling correlation and covariance matrices given covariates: Existing
approaches}
\label{sec:model_corr}

There is a large literature on modelling association structures of
multivariate distributions. We review here those parts of it that are
most relevant to our work. This means that we consider different
approaches to modelling correlation or covariance matrices given
covariates, with a particular focus on how the models are specified.
This can be combined with different (parametric or other) specifications
for the joint distribution as a whole, and different methods of
estimating its parameters. Our own model uses a parametric specification
of a multivariate normal distribution and Bayesian estimation of its
parameters, but the review here is not limited to that case.

We consider only approaches which specify associations in terms of
covariances or correlations. This means that we exclude models for
conditional associations of some of the variables given the others, such
as log-linear models for categorical data or covariance selection models
for the inverse covariance matrix of a multivariate normal distribution.
We include here models for both correlation and covariance matrices,
although our model is for the correlation matrix. We focus on models
which are specified directly for these associations or transformations
of them. This excludes models where the associations are determined
indirectly via latent variables, such as random effects models and
common factor models. Note, however, that the multivariate response
variable whose covariance or correlation matrix is being modelled may
itself be a latent variable, as it is in our analysis where we model the
correlations of the latent $\boldsymbol{\eta}_{i}$.

Models for associations may have two broad goals. The first of them is
to impose a patterned structure on the covariance or correlation matrix
which is more parsimonious than an unstructured matrix that has separate
parameters for each pair of variables. This is the case, for example,
when an autocorrelation structure is specified for responses that are
ordered in time. An extreme version of this occurs in very
high-dimensional problems where parsimonious specification is essential
for consistent estimation of covariance matrices. We do not consider
such regularisation methods here (see \citealt{pourahmadi2011} and
\citealt{fanetal16} for reviews). The second broad type of model
specification considers instead an unstructured model of associations,
but specifies how the correlations or covariances in it depend on
covariates that are characteristics of the units of analysis, such as
the survey respondents in our application. This is the goal of our
modelling.

A key question is how to ensure that the estimated matrices will be
positive definite. Here \citet{pinheiro1996unconstrained} pointed out a
key distinction between two approaches: unconstrained ones where the
models are specified for parametrizations (transformations) of the
association matrix which are guaranteed to imply a positive definite
matrix, and constrained ones where positive definiteness is imposed in
the estimation process. Our approach is an instance of constrained
estimation, but we summarise first the most important unconstrained
methods (see \citealt{pourahmadi2011} and \citealt{pan+pan17} for more
detailed reviews). They differ in what transformation they use. The most
common is the modified Cholesky decomposition of the covariance matrix.
It was introduced by \cite{pourahmadi1999joint}, and general models for
it (including covariates) were proposed by \cite{pan+mackenzie06}. Other
possible transformations include the matrix logarithm of the covariance
matrix \citep{chiu1996matrix}, the `alternative Cholesky decomposition'
of the covariance matrix \citep{chen2003random}, a variant of the
modified Cholesky decomposition proposed by \cite{zhang+leng12},
parametrizations of the correlation matrix in terms of partial
autocorrelations \citep{wang+daniels13} or hyperspherical co-ordinates
of its standard Cholesky decomposition \citep{zhangetal15}, and the
matrix logarithm of the correlation matrix
\citep{archakov2021new,hu2021regression}.

The natural advantage of the unconstrained methods is that they ensure
positive definiteness at any values of the covariates. The corresponding
disadvantage is that because the models are not specified for the
individual association parameters (or even transformations of them), the
model parameters are not easily interpretable. All of the
interpretations that are available apply only in cases where the
response variables have a natural ordering, most obviously in
longitudinal data where they are ordered in  time. Then the parameters
of the modified Cholesky decomposition can be interpreted in terms of an
autoregressive model for each variable given its predecessors, and those
of the alternative Cholesky decomposition and of \cite{zhang+leng12}
similarly in terms of a moving average representation of each variable
given error terms of the previous ones
\citep{pourahmadi2007cholesky,pan+pan17}, those of \cite{wang+daniels13}
as partial autocorrelations of two variables given all the intervening
ones, and the hyperspherical co-ordinate parametrization in terms of
semi-partial correlations \citep{ghoshetal21}. In our application these
properties are not helpful because we consider response variables with
no ordering and want to obtain a simple interpretation of coefficients
for the correlations themselves.

Turning now to approaches that model individual pairwise association
parameters directly, for correlations we could use transformations of
them (e.g.\ Fisher's $z$) to ensure that the fitted correlations are
constrained to $(-1,1)$. This, however, is not sufficient to ensure that
the correlation matrix as a whole is positive definite, except for a
bivariate response (for this case, see e.g. \citealt{wildingetal11} and
references therein). One pragmatic approach that we could then take is
to simply employ such models anyway also more generally, ignoring the
possibility of some non-positive definite matrices (see e.g.
\citealt{yan+fine07}). It is plausible that this will work well in some
applications, in the best case that the fitted correlation matrices end
up being positive definite at all relevant values of the covariates.
However, it is not in principle a satisfactory general approach.
\cite{luo+pan22} suggest post-hoc adjustments to fitted correlation
models to make them positive definite; this, however, is unhelpful when
the focus is on interpreting coefficients of the model. A different
solution is provided by \cite{hoff+niu12} who propose a model (analogous
in form to factor analysis models) where covariances depend on quadratic
functions of covariates, and the matrix is automatically positive
definite.

Existing literature on constrained estimation focuses on linear models
for covariances or correlations. This is not really a limitation even
for correlations, because the constraint that the matrix should be
positive definite also implies that the individual correlations will be
in $(-1,1)$. The most developed results here are for the linear
covariance model for multivariate normal distribution
\citep{anderson1973asymptotically}, in which  the covariance matrix
takes the form $\boldsymbol{\Sigma}=\sum_{k} \nu_{k} \mathbf{G}_{k}$
where $\nu_{k}$ are parameters and $\mathbf{G}_{k}$ are known, linearly
independent symmetric matrices. \cite{zwierniketal17} show that although
the log-likelihood function for this model typically has multiple local
maxima, any hill climbing method initiated at the least squares
estimator will converge to its global maximum with high probability.
\cite{zouetal17} consider the case where the $\mathbf{G}_{k}$ are
similarity matrices between the response variables, and propose maximum
likelihood estimates and constrained least squares estimators for this
model. In these formulations, $\boldsymbol{\Sigma}$ is the same for all
units $i$. This is further relaxed by \cite{zouetal22}, who allow the
values of the similarity matrices in \cite{zouetal17} to depend on
unit-specific covariates, thus defining $\boldsymbol{\Sigma}_{i}$ as
linear combinations of unit-specific similarity matrices.

We will also consider a linear model, as shown in
(\ref{eta_corr_model}), but for the correlations and given unit-specific
covariates. We will estimate it in the Bayesian framework and using
Markov chain Monte Carlo (MCMC) estimation. As MCMC updates different
parameters separately, it is natural to employ the decomposition
(\ref{eta_var}) and model the standard deviations and correlations
separately, and this is what has been done in most previous literature
that has used a Bayesian approach. MCMC also provides an obvious way to
implement constrained estimation, at least in principle. This can be
done at each sampling step of the estimation algorithm, by constraining
the prior distribution, the proposal distribution from which values of
the parameters are drawn, or the acceptance probabilities of the sampled
values, in a way which rules out inadmissible values of the sampled
parameters. But although the principle is obvious, implementing it is
not necessarily easy. Two instances of this approach that we will draw
on in particular are those of \cite{johnbar2000} and
\cite{wongEfficientEstimationCovariance2003}, and we will return to them
below. Other methods of this kind have been proposed by
\cite{chibAnalysisMultivariateProbit1998} and \cite{liechtyetal04}.

What is missing from these existing Bayesian implementations is the
inclusion of unit-specific covariates in the models for the
correlations, which is our focus. In Section
\ref{sec:estimation_of_the_latent_variable_model} we propose an extended
estimation procedure which accommodates such covariates. This in turn
requires some further consideration of the constraints on positive
definiteness of the correlation matrix, because this now has to hold at
different values of the covariates. This question is discussed first, in
the next section.

\section{Ensuring a positive definite correlation matrix}  
\label{sub:Some Useful Properties}

The key challenge in our constrained estimation is to ensure that the
estimated correlation matrices remain positive definite at all relevant
values of the covariates. Here we give some further theoretical results
which establish when and how this can be achieved.

Let $\mathbf{R}=\mathbf{R}(\boldsymbol{\rho})$ denote a symmetric matrix
where all the diagonal elements equal 1 and the distinct off-diagonal
elements $\boldsymbol{\rho}=(\rho_{1},\dots,\rho_{L})^{\top}$ are all in
$(-1,1)$. Let $C_{\rho}$ denote the set of $\boldsymbol{\rho}$ such that
$\mathbf{R}(\boldsymbol{\rho})$ is positive definite, and thus a
correlation matrix, for all $\boldsymbol{\rho}\in C_{\rho}$. It is a
convex subset of the hypercube $[-1,1]^L$ \citep[for the shape of
$C_{\rho}$ in the cases $L=3$ and $L=6$, i.e.\ for $3\times3$ and
$4\times 4$ correlation matrices,
see][]{rousseeuwShapeCorrelationMatrices1994}.

We consider model (\ref{eta_corr_model}) where
$\boldsymbol{\rho}=\boldsymbol{\alpha}^{\top}\mathbf{X}$. In this
section we mostly omit the unit subscript from $\mathbf{X}$ and
$\boldsymbol{\rho}$, and take $\mathbf{X}$ to include only those
covariates that are included in the model for $\boldsymbol{\rho}$
(excluding any that are used only for the means $\boldsymbol{\mu}$). It
is clear that this $\boldsymbol{\rho}$ cannot be in $C_{\rho}$ for all
values of the parameters $\boldsymbol{\alpha}$ and covariates
$\mathbf{X}$. We need to limit the scope of the estimated models to the
values that do imply $\boldsymbol{\rho}\in C_{\rho}$. For the
covariates, it is useful to introduce here some additional notation. Let
$\mathbf{Z}$ be the smallest vector of distinct variables, including a
constant term 1, which determines $\mathbf{X}=\mathbf{X}(\mathbf{Z})$.
Here $\mathbf{Z}$ may be shorter than $\mathbf{X}$ if some variables in
$\mathbf{X}$ are functions of a smaller number of variables in
$\mathbf{Z}$, e.g.\ when $\mathbf{X}$ includes polynomials or product
terms (interactions). Suppose that $\mathbf{Z}$ is a $p \times 1$ vector
and $\mathbf{X}$ a $q\times 1$ vector. Below we denote sets
$S_{Z}\subset \mathbb{R}^{p}$ and $S_{X}\subset \mathbb{R}^{q}$ of
$\mathbf{Z}$ and $\mathbf{X}$ respectively with appropriate subscripts,
and also $S_{X}=\mathbf{X}(S_{Z})=\{\mathbf{X}(\mathbf{Z})\mid
\mathbf{Z}\in S_Z\}$ when needed to indicate how a set of $\mathbf{Z}$
determines that of $\mathbf{X}$.

A combination of values $(\mathbf{Z},\boldsymbol{\alpha})$ is said
to be \emph{feasible} if
$\boldsymbol{\rho}=\boldsymbol{\alpha}^{\top}\mathbf{X}(\mathbf{Z})\in
C_{\rho}$, and $(\mathbf{X},\boldsymbol{\alpha})$ to be feasible if
$\boldsymbol{\rho}=\boldsymbol{\alpha}^{\top}\mathbf{X}\in C_{\rho}$. We
aim to identify known sets of $\mathbf{Z}$ and $\boldsymbol{\alpha}$
such that all combinations of values from them are feasible. This will involve the following steps:
\begin{itemize}
\item[(1)] Specify a set $S_{Z}$
for $\mathbf{Z}$ for which we want the
ensure that the estimated correlation matrices are positive definite.
\item[(2)]
Specify a finite test
set $S_{XT}=\{\mathbf{X}_{1},\dots,\mathbf{X}_{T}\}$ of
values for $\mathbf{X}$, which will be used for checking positive
definiteness during MCMC estimation. The choice of $S_{XT}$ will depend on $S_Z$.
\item[(3)]
Carry out MCMC estimation which includes sampling values of
$\boldsymbol{\alpha}$ (together with the other model parameters). At
each iteration, carry out checks to ensure that the value of
$\boldsymbol{\alpha}$ that is retained from the iteration is feasible with all
$\mathbf{X}\in S_{XT}$. In the end, this produces an MCMC sample
$S_{\alpha}=\{\boldsymbol{\alpha}_{1},\dots,\boldsymbol{\alpha}_{M}\}$.
\item[(4)] Conclude that $(\mathbf{Z},\boldsymbol{\alpha})$ is feasible
for all combinations of $\mathbf{Z}\in S_{Z}$ and values of
$\boldsymbol{\alpha}$ in the convex hull of $S_{\alpha}$.
\end{itemize}
In step (1), $S_{Z}$ should include the substantively relevant and
interesting values of the covariates for which we want our estimated
model to imply valid correlation matrices. For example, this could be a
finite set $S_{ZN}=\{\mathbf{Z}_{1},\dots,\mathbf{Z}_{N}\}$, normally
including at least all the distinct values among the $\mathbf{Z}_{i}$,
$i=1,\dots,n$, in the observed data. $S_{Z}$ is always of this form when
all the variables in $\mathbf{Z}$ are categorical (coded as dummy
variables). If $\mathbf{Z}$ includes continuous variables, we can also
expand $S_{ZN}$ to an infinite set, such as its convex hull $
S_{Zh}=\{\sum_{j=1}^{N} \lambda_{j}
\mathbf{Z}_{j}\;|\;\sum_{j=1}^{N}\lambda_{j}=1; \lambda_{j}\ge 0 \text{
for all }j=1,\dots,N \}$ or the hyperrectangle $S_{Zr} =
\left\{(Z_{1},\dots,Z_{p})\; | \; Z_{s}\in[l_{s},u_{s}] \text{ for all
}s=1,\dots,p\right\}, $ for specified $l_{s}\le \min
\{Z_{js}\;|\;j=1,\dots,N\}$ and $u_{s}\ge \max
\{Z_{js}\;|\;j=1,\dots,N\}$ for each $s=1,\dots, p$. Note that here
$S_{ZN}\subset S_{Zh}\subseteq S_{Zr}$.

How the sample $S_{\alpha}$ in step (3) can be drawn is described at the
end of this section and in Section
\ref{sec:estimation_of_the_latent_variable_model}. That alone is not yet
enough for what we need. This is because in step (3) we can only check
feasibility for a finite number of values of $\mathbf{X}$ and
$\boldsymbol{\alpha}$, whereas the sets we want to draw conclusions on
are infinite for at least $\boldsymbol{\alpha}$ and possibly also for
$\mathbf{Z}$. So some additional theoretical results are needed to
justify the conclusion in step (4); this also informs the choice of the
test set $S_{XT}$ in step (2).

Consider first the correlations
$\boldsymbol{\rho}= \boldsymbol{\alpha}^{\top}\mathbf{X}$ as they depend
on $\mathbf{X}$ rather than $\mathbf{Z}$. Let
\[
C_{\alpha,S_X} =
\{\boldsymbol\alpha\in\mathbb{R}^{L\times q} \, | \, \boldsymbol\rho =
\boldsymbol{\alpha}^{\top}\mathbf{X} \in C_\rho \; \text{ for all }\;
\mathbf{X}\in S_X\}
\]
be the set of values of $\boldsymbol{\alpha}$ which are feasible when combined with
any $\mathbf{X}$ in $S_{X}$. \cref{prop:C_mu1} gives some basic
properties of $C_{\alpha,S_{X}}$. Proofs of them are given in
\ref{app:proofs_of_lemmas}.

\begin{proposition}
Properties of $C_{\alpha,S_X}$:
\label{prop:C_mu1}\leavevmode
  \begin{enumerate}[(i)]
    \item If $S_{X_2}\subseteq S_{X_1},$ then
    $C_{\alpha,S_{X_1}}\subseteq C_{\alpha,S_{X_2}}$.
    \item
    $C_{\alpha,\text{Conv}(S_X)}=C_{\alpha,S_X}$, where $\text{Conv}(S_X)$
    denotes the convex hull of $S_X$.
    \item $\boldsymbol 0 \in C_{\alpha,S_X}$.
    \item
    Suppose further that  there exist $q$ linearly independent elements in $S_X$. Then
        $C_{\alpha,S_X}$ is bounded.
    \item $C_{\alpha,S_X}$ is a convex set.
  \end{enumerate}
\end{proposition}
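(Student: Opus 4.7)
The plan is to verify each of the five properties by combining three ingredients: the convexity of $C_\rho$ stated in the excerpt, the bilinearity of the map $(\boldsymbol{\alpha},\mathbf{X})\mapsto\boldsymbol{\alpha}^\top\mathbf{X}$, and the inclusion $C_\rho\subseteq(-1,1)^L$.

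Parts (i) and (iii) are essentially immediate from the definition. For (i), if $\boldsymbol{\alpha}$ satisfies $\boldsymbol{\alpha}^\top\mathbf{X}\in C_\rho$ for every $\mathbf{X}\in S_{X_1}$, then the same holds for every $\mathbf{X}\in S_{X_2}\subseteq S_{X_1}$. For (iii), $\boldsymbol{\alpha}=\boldsymbol{0}$ yields $\boldsymbol{\rho}=\boldsymbol{0}$ at every $\mathbf{X}$, and $\mathbf{R}(\boldsymbol{0})$ is the identity matrix, trivially positive definite.

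For (ii) and (v) I would invoke linearity together with convexity of $C_\rho$. In (v), for $\boldsymbol{\alpha}_1,\boldsymbol{\alpha}_2\in C_{\alpha,S_X}$ and $t\in[0,1]$, at every $\mathbf{X}\in S_X$ the value $(t\boldsymbol{\alpha}_1+(1-t)\boldsymbol{\alpha}_2)^\top\mathbf{X}=t\,\boldsymbol{\alpha}_1^\top\mathbf{X}+(1-t)\,\boldsymbol{\alpha}_2^\top\mathbf{X}$ is a convex combination of two elements of $C_\rho$, hence in $C_\rho$ by its convexity. For (ii), the inclusion $C_{\alpha,\text{Conv}(S_X)}\subseteq C_{\alpha,S_X}$ follows from (i) applied to $S_X\subseteq\text{Conv}(S_X)$; for the reverse, writing $\mathbf{X}=\sum_j\lambda_j\mathbf{X}_j$ with $\mathbf{X}_j\in S_X$ and $\lambda_j\ge0$, $\sum_j\lambda_j=1$, linearity in $\mathbf{X}$ yields $\boldsymbol{\alpha}^\top\mathbf{X}=\sum_j\lambda_j\,\boldsymbol{\alpha}^\top\mathbf{X}_j$, once more a convex combination of elements of $C_\rho$.

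The only step that requires real work is (iv), and this is where the linear independence hypothesis enters. The plan is to select $q$ linearly independent elements $\mathbf{X}_1,\ldots,\mathbf{X}_q\in S_X$ and assemble them into an invertible $q\times q$ matrix $\mathbf{X}^*=[\mathbf{X}_1,\ldots,\mathbf{X}_q]$. For any $\boldsymbol{\alpha}\in C_{\alpha,S_X}$, each column $\boldsymbol{\alpha}^\top\mathbf{X}_j$ of the product $\boldsymbol{\alpha}^\top\mathbf{X}^*$ lies in $C_\rho\subseteq(-1,1)^L$, so every entry of $\boldsymbol{\alpha}^\top\mathbf{X}^*$ is bounded by $1$ in absolute value. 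Multiplying on the right by the fixed invertible matrix $(\mathbf{X}^*)^{-1}$ then yields a bound on the entries of $\boldsymbol{\alpha}^\top$, and thus of $\boldsymbol{\alpha}$, depending only on $\mathbf{X}^*$; so $C_{\alpha,S_X}$ is contained in a bounded subset of matrix space. The main conceptual obstacle here is simply recognising that linear independence is both necessary and sufficient: without it, the affine subspace $\{\boldsymbol{\alpha}:\boldsymbol{\alpha}^\top\mathbf{X}_j=\mathbf{0}\text{ for all }j\}$ is non-trivial and $C_{\alpha,S_X}$ contains an unbounded translate of it.
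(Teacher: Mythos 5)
Your proposal is correct and follows essentially the same route as the paper's own proof: parts (i), (iii) directly from the definitions, parts (ii) and (v) from linearity of $\boldsymbol{\alpha}^{\top}\mathbf{X}$ combined with convexity of $C_{\rho}$, and part (iv) by assembling $q$ linearly independent covariate vectors into an invertible matrix $\mathbf{X}_{*}$ and writing $\boldsymbol{\alpha}^{\top}=[\boldsymbol{\rho}_{1},\dots,\boldsymbol{\rho}_{q}]\mathbf{X}_{*}^{-1}$, which is bounded since each $\boldsymbol{\rho}_{j}\in[-1,1]^{L}$. Your closing observation that linear independence is also necessary for boundedness is a correct addition not made in the paper, but the core argument is identical.
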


Parts (i) and (ii) of Proposition \ref{prop:C_mu1} explain how
$C_{\alpha,S_{X}}$ depends on the set $S_{X}$ of values considered for
$\mathbf{X}$. When step (3) is completed, we know that
$\boldsymbol{\alpha}\in C_{\alpha,S_{XT}}$ for all
$\boldsymbol{\alpha}\in S_{\alpha}$. Then also $\boldsymbol{\alpha}\in
C_{\alpha,\text{Conv}(S_{XT})}$ by (ii). In other words, even though
feasibility was checked only for a finite number of values of
$\mathbf{X}$, we know that it holds also for the infinite set of their
convex hull.

We then need to translate this result for $\mathbf{X}$ back to
$\mathbf{Z}$. This is simple if $\mathbf{X}=\mathbf{Z}$, so that
$S_{XT}=S_{ZT}$, where $S_{ZT}$ is a finite test set for $\mathbf{Z}$.
Here we need to ensure that $S_{ZT}$ has been chosen so that
$S_{Z}\subseteq \text{Conv}(S_{ZT})$, i.e.\ that the convex hull of
$S_{ZT}$ covers $S_{Z}$. Then, for any $\boldsymbol{\alpha}\in
S_\alpha$, we have $\boldsymbol{\alpha}\in
C_{\alpha,\text{Conv}(S_{ZT})}$ by (ii) as above, and then
$\boldsymbol{\alpha}\in C_{\alpha,S_{Z}}$ by (i), as required. In terms
of the possible target sets $S_{Z}$ defined above, the test set $S_{ZT}$
could be $S_{ZN}$, which ensures feasibility also for all $\mathbf{Z}$
in $S_{Zh}$, or $S_{ZT}$ could consist of the vertices of $S_{Zr}$,
which ensures feasibility in all of $S_{Zr}$, $S_{Zh}$ and $S_{ZN}$.

Some more care is needed when $\mathbf{X}=\mathbf{X}(\mathbf{Z})$
includes non-linear functions of $\mathbf{Z}$. If $S_{Z}=S_{ZN}$ is
finite, a simple pragmatic choice is to set $S_{XT}= \mathbf{X}(S_{Z})$
and check all of their values. Otherwise, the forms of these functions
need to be considered. This can be seen already in a model for a single
correlation $\rho$, e.g.\ when it is modelled as a quadratic function
$\rho=\alpha_{0}+\alpha_{1}Z+\alpha_{2}Z^{2}$ of a single $Z$, so that
$\mathbf{X}(Z)=(X_{1},X_{2},X_{3})^\top=(1,Z,Z^{2})^\top$. Suppose that
$S_{Z}=[Z_{1},Z_{2}]$. It is not enough to check just the points
$\mathbf{X}_{1}=\mathbf{X}(Z_1)$ and $\mathbf{X}_{2}=\mathbf{X}(Z_2)$,
because we may have $\rho\in (-1,1)$ at $Z_{1}$ and $Z_{2}$ but not at
all values between them. It is sufficient to identify one more point
$\mathbf{X}_{3}=(1,X_{23},X_{33})^\top$ such that the convex hull of
$S_{XT}=\{\mathbf{X}_{1},\mathbf{X}_{2},\mathbf{X}_{3}\}$ covers
$\mathbf{X}(S_{Z})$. For example, this can be obtained by adding the
intersection point of the tangents of the curve $f(Z)=Z^{2}$ drawn at
$Z_{1}$ and $Z_{2}$, i.e.\ $X_{23}=(Z_{1}+Z_{2})/2$ and
$X_{33}=Z_{1}Z_{2}$. Note that such a choice depends only on the forms
of $S_{Z}$ and $\mathbf{X}(\mathbf{Z})$, so it can be used with any
value of $\boldsymbol{\alpha}$ and for any number of
correlations~$\rho_{l}$.

At this point we know that $\boldsymbol{\alpha}\in
C_{\alpha,\mathbf{X}(S_{Z})}$ for all $\boldsymbol{\alpha}\in
S_{\alpha}$, i.e.\ that all the values in the MCMC sample
$S_{\alpha}=\{\boldsymbol{\alpha}_{1},\dots,\boldsymbol{\alpha}_{M}\}$
are feasible when combined with any value of $\mathbf{Z}$ in the
(possibly infinite) target set $S_{Z}$. But we still need to extend this
conclusion to other values of $\boldsymbol{\alpha}$ that were not
sampled, specifically to the convex hull $\text{Conv}(S_{\alpha})$ of
$S_{\alpha}$. This is justified by parts (iii)--(v) of Proposition
\ref{prop:C_mu1}, which concern the values of $\boldsymbol{\alpha}$ in
$C_{\alpha,S_{X}}$ given a fixed $S_{X}$. Part (iii) shows that this set
is non-empty, so some feasible $\boldsymbol{\alpha}$ always exist, and
(iv) states that feasible $\boldsymbol\alpha$ will not drift away, as
long as $S_X$ is not degenerate. Finally, part (v) shows that
$\boldsymbol{\alpha}\in C_{\alpha,\mathbf{X}(S_{Z})}$ for all
$\boldsymbol{\alpha}\in \text{Conv}(S_{\alpha})$ as required, thus
completing step (4) of the process defined above. In particular, the
convex hull of the MCMC sample $S_{\alpha}$ includes the summary
statistics that we will typically use to summarise it for estimation of
the parameters in $\boldsymbol{\alpha}$, such as their (posterior) means
and quantile-based interval estimates. These estimates are thus also
guaranteed to imply positive definite correlation matrices given any
values of the covariates in the pre-specified set of interest $S_{Z}$.

We note that part (v) of Proposition 1 would not necessarily hold if the
individual correlations $\rho_{l}$ were modelled using a nonlinear
transformation, for example Fisher's~$z$ transformation which would give
the model
$\boldsymbol{\rho}=\text{tanh}(\boldsymbol{\alpha}^{\top}\mathbf{X})$.
Inferring feasibility from $S_{\alpha}$ to $\text{Conv}(S_{\alpha})$ is
thus not necessarily justified for these models. Such transformations
are normally used to ensure that individual correlations are constrained
to be in the range $(-1,1)$. That, however, is not needed here, because
positive definiteness of the matrix as a whole already implies that all
of the correlations are in the valid range.

Consider finally step (3), where we want to ensure that the value of
parameters $\boldsymbol{\alpha}$ retained from each MCMC iteration is
feasible with all $\mathbf{X}\in S_{XT}$. The key result for it is given
here, and it is then implemented as part of our estimation as described
in Section~\ref{sec:estimation_of_the_latent_variable_model}. Suppose
that $\boldsymbol{\alpha}'$ is a proposed value for
$\boldsymbol{\alpha}$. Since $S_{XT}$ is finite, it would be possible to
simply calculate $\mathbf{R}((\boldsymbol{\alpha}')^{\top}\mathbf{X})$
for all $\mathbf{X}\in S_{XT}$ and retain $\boldsymbol{\alpha}'$ if
these were all positive definite. This, however, could be
computationally demanding and inefficient. Instead, we will check and
update $\boldsymbol{\alpha}$ one element at a time. This builds on the
result by \cite{johnbar2000} that when starting with a correlation
matrix, there exists a continuous interval for each single correlation
in it that still yields a positive definite correlation matrix while
holding the rest of the correlations fixed at their previous values.
Here we extend this idea to apply to the individual coefficients in
$\boldsymbol{\alpha}$ and multiple values of the covariates
$\mathbf{X}$. The procedure relies on the following result, the proof of
which is given in \ref{app:proofs_of_lemmas}:

\begin{proposition} Continuous feasible intervals for the coefficients
in $\boldsymbol{\alpha}$:
\leavevmode\label{prop:conti_interval_mu}
Consider a finite set $S_{XT}=\{\mathbf{X}_j=(X_{j1},\dots,X_{jq})^{\top}\, |\,  j=1,\dots,T\}$
and any fixed value $\boldsymbol\alpha=[\boldsymbol{\alpha}_{1}, \dots,
\boldsymbol{\alpha}_{L}]^\top \in C_{\alpha,S_{XT}}$.
Denote here
(deviating slightly from previous notation)
$\boldsymbol{\alpha}=(\alpha_{lm},\boldsymbol\alpha_{-lm}^\top)^\top$
where $\alpha_{lm}$ is
the coefficient of $X_{jm}$ in the model for correlation $\rho_{l}$, for
any
$m=1,\dots,q$ and $l=1,\dots,L$, and $\boldsymbol{\alpha}_{-lm}$
denotes all other elements of $\boldsymbol{\alpha}$,
$\boldsymbol{\rho}=(\rho_{l},\boldsymbol{\rho}_{-l}^{\top})^{\top}$
where $\boldsymbol{\rho}_{-l}$ denotes all other elements of the
distinct correlations $\boldsymbol{\rho}$ except $\rho_{l}$, and
$\mathbf{R}(\rho_{l},\boldsymbol{\rho}_{-l})$ the correlation matrix
implied by $\boldsymbol{\rho}$. Let $\boldsymbol{\rho}_{-l}^{(j)}$
denote the value of $\boldsymbol{\rho}_{-l}$ for
$\boldsymbol{\rho}_{j}=\boldsymbol{\alpha}^{\top}\mathbf{X}_{j}$,
for $j=1,\dots,T$.
  \begin{enumerate}[(i)]
    \item
    There exists a non-empty interval $(a_{lm},b_{lm})$ such that
    $\boldsymbol\alpha' =
    (\alpha_{lm}',\boldsymbol\alpha_{-lm}^\top)^\top \in C_{\alpha,S_{XT}}$ for all
    $\alpha_{lm}'\in (a_{lm}, b_{lm})$.

    \item
Let $f_{jl}(\rho_{l}')=\vert\mathbf{R}(\rho_{l}',\boldsymbol\rho_{-l}^{(j)})\vert$,
treated as a function of $\rho_{l}'$, where $\vert\cdot\vert$
denotes the determinant of a matrix.
    If $X_{jm}\neq 0$, let
    \begin{equation}\label{eq:interval_j}
      \begin{aligned}
        a_{lm}^{(j)} &= \frac{g_{jl} - \sum_{k\neq
        m}\alpha_{lk}\,X_{jk} - \text{sgn}(X_{jm})\,h_{jl}}{X_{jm}},\\
        b_{lm}^{(j)} &= \frac{g_{jl} - \sum_{k\neq
        m}\alpha_{lk}\,X_{jk} + \text{sgn}(X_{jm})\,h_{jl}}{X_{jm}}
      \end{aligned}
    \end{equation}
    for each $j=1,\dots,T$,
where $g_{jl}=-d_{jl}/(2c_{jl})$ and
$h_{jl}=[(d_{jl}^{2}-4c_{jl}e_{jl})/(4c_{jl}^{2})]^{1/2}$
for $c_{jl} = [f_{jl}(1)+f_{jl}(-1)-2f_{jl}(0)]/2$, $d_{jl} = [f_{jl}(1) -
f_{jl}(-1)]/2$ and $e_{jl} = f_{jl}(0)$.
    If $X_{jm}=0$,
    set $a_{lm}^{(j)}=-\infty$ and $b_{lm}^{(j)}=+\infty$. Then
    $(a_{lm},b_{lm})=\cap_{j=1}^T (a_{lm}^{(j)},b_{lm}^{(j)})$.
This interval is non-empty because it contains at least
the current value $\alpha_{lm}$.
  \end{enumerate}
\end{proposition}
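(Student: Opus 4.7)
The plan is to reduce the joint feasibility constraint — that $\mathbf{R}((\boldsymbol{\alpha}')^\top \mathbf{X}_j)$ is positive definite simultaneously for all $j=1,\dots,T$ — to a family of one-dimensional problems indexed by $j$, and then pull each such problem back to an interval for the scalar parameter $\alpha_{lm}'$. The key observation is that varying only $\alpha_{lm}$ perturbs only the $l$-th correlation at each test point, because the remaining correlations $\boldsymbol{\rho}_{-l}^{(j)}$ depend on $\boldsymbol{\alpha}_{-lm}$ and $\mathbf{X}_j$ alone and are therefore held fixed. Thus at each $\mathbf{X}_j$ we are reduced to the classical problem of characterising admissible values of a single correlation with all others fixed, and the proof proceeds by gluing these $T$ answers together.

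For the reduction at a single $j$, I would follow the argument of \cite{johnbar2000}: permute rows and columns of $\mathbf{R}$ so that the two indices carrying $\rho_l$ appear last. Every leading principal minor except the full determinant then avoids at least one of these two indices and so is independent of $\rho_l'$; these smaller minors are positive at the starting value by the hypothesis $\boldsymbol{\alpha}\in C_{\alpha,S_{XT}}$, leaving $f_{jl}(\rho_l')>0$ as the only binding constraint. Because $\rho_l'$ sits in exactly two symmetric positions of $\mathbf{R}$ and each term of the Leibniz expansion picks at most one entry per row, $f_{jl}$ is a polynomial of degree at most two in $\rho_l'$. Evaluating $f_{jl}$ at $\rho_l' \in \{-1, 0, 1\}$ and solving the resulting linear system in the coefficients recovers $c_{jl}, d_{jl}, e_{jl}$. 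A short computation from the Leibniz expansion identifies $c_{jl}$ as the negative of the determinant of the $(n-2)\times(n-2)$ principal submatrix obtained by deleting the two rows and columns that carry $\rho_l$; since $\mathbf{R}(\boldsymbol{\rho}_j)$ is positive definite, so is this submatrix, hence $c_{jl}<0$. The quadratic $f_{jl}$ is therefore strictly concave, and because it is positive at the current value $\rho_l^{(j)}$ it has two real roots $g_{jl}\pm h_{jl}$ given by the usual formula, yielding the feasible open interval $(g_{jl}-h_{jl},\, g_{jl}+h_{jl})$ for $\rho_l'$.

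It remains to pull this interval back to an interval for $\alpha_{lm}'$ via the affine map $\alpha_{lm}'\mapsto \alpha_{lm}'X_{jm} + \sum_{k\neq m}\alpha_{lk}X_{jk}$. When $X_{jm}\neq 0$ this map is a bijection, and isolating $\alpha_{lm}'$ in the two inequalities — with division by $X_{jm}$ flipping the inequalities when $X_{jm}<0$ — produces exactly the expressions in (\ref{eq:interval_j}), the $\text{sgn}(X_{jm})$ factor encoding that sign reversal. When $X_{jm}=0$ the $l$-th correlation at $\mathbf{X}_j$ is independent of $\alpha_{lm}'$, so no constraint is imposed and the interval is $(-\infty,+\infty)$. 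The final $(a_{lm},b_{lm})$ is the intersection over $j=1,\dots,T$, and it is non-empty because by assumption the current value $\alpha_{lm}$ lies inside every factor interval.

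The main technical obstacle is the second step: verifying rigorously that positive definiteness reduces to the single determinantal inequality $f_{jl}(\rho_l')>0$, and pinning down the sign of the leading coefficient $c_{jl}$. The first of these needs the permutation device to make the argument about leading principal minors rigorous, and the second needs the Leibniz identification of $c_{jl}$ with a principal subdeterminant of a positive-definite matrix. Once these structural facts are in place, the formula for the endpoints in (\ref{eq:interval_j}) follows by elementary algebra, and non-emptiness of the intersection is immediate from the hypothesis $\boldsymbol{\alpha}\in C_{\alpha,S_{XT}}$.
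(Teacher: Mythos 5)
Your proof is correct and follows essentially the same route as the paper: a Sylvester's-criterion/permutation argument showing that only the full determinant depends on $\rho_l'$, identification of the leading coefficient of the quadratic $f_{jl}$ with the negative of a principal subdeterminant of a positive definite matrix (hence $c_{jl}<0$), and an affine pull-back to $\alpha_{lm}'$ with the sign of $X_{jm}$ handled exactly as in (\ref{eq:interval_j}). One small point in your favour: you justify the existence of two real roots by positivity of $f_{jl}$ at the current value $\rho_l^{(j)}$ (guaranteed by the hypothesis $\boldsymbol{\alpha}\in C_{\alpha,S_{XT}}$), whereas the paper's proof invokes $f_{jl}(0)>0$, which need not hold in general since setting a single correlation to zero can destroy positive definiteness; your version is the safer one.
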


Computationally the most demanding part of using this result is the
calculation of the necessary determinants. Efficient methods for
obtaining them, and other elements of the computations, are discussed in
Section \ref{sub:estimation_of_the_structural_model2}.

\section{Estimation of the model} 
\label{sec:estimation_of_the_latent_variable_model}

Following the example and motivation of \cite{kuhaetal22}, we use a two-step procedure to estimate the latent variable model defined in Section
\ref{sec:latent_variable_model}.
The parameters of the measurement model are estimated first, as explained in
\cref{sub:estimation_of_the_measurement_model}. They are then fixed at
their estimated values in the second step, where the parameters of the
structural model are estimated as described in
\cref{sub:estimation_of_the_structural_model2}.

\subsection{Estimation of the measurement model} 
\label{sub:estimation_of_the_measurement_model}


In the first step, the
measurement parameters $\boldsymbol{\phi}_G$ and
$\boldsymbol{\phi}_R$ are estimated separately. For
$\boldsymbol{\phi}_G$, the data are the responses $\mathbf{Y}_{Gi}$,
the measurement model is specified by
(\ref{eq:meas1})--(\ref{eq:probit_linkF}), and the structural model for
$\xi_{Gi}$ and $\boldsymbol{\eta}_{Gi}=(\eta_{GPi},\eta_{GFi})^\top$ is obtained from
(\ref{eq:xi_dist})--(\ref{eta_corr}) by omitting
$\boldsymbol{\eta}_{Ri}$ and the covariates $\mathbf{X}_{i}$.
The log likelihood function for $\boldsymbol{\phi}_{G}$ is then
\begin{align}
  &\log p(\mathbf{Y}_{G}|\boldsymbol{\phi}_{G}
  ,\pi_G,
  \mu_{\eta_{GP}},
  \mu_{\eta_{GF}},
  \sigma^2_{\eta_{GP}}, \rho_{\eta_G}) \nonumber \\
  &= \sum_{i=1}^n
  \log\Big[ \pi_{G}\,
  \int
  p(\mathbf{Y}_{Gi}|\xi_{Gi}=1,\eta_{GPi},\eta_{GFi};\boldsymbol{\phi}_G)\,
  p(\boldsymbol{\eta}_{Gi}; \mu_{\eta_{GP}},
  \mu_{\eta_{GF}},
  \sigma^{2}_{\eta_{GP}}, \rho_{\eta_{G}})\,d\eta_{GPi}\, d\eta_{GFi}
  \nonumber \\
  &\quad \hspace*{2em}+ (1-G_i)(1-\pi_{G})\Big]
\label{step1_ll}
  \end{align}
where $\pi_G = p(\xi_{Gi}=1)$ and $p(\boldsymbol{\eta}_{Gi};
\mu_{\eta_{GP}}, \mu_{\eta_{GF}},
\sigma^{2}_{\eta_{GP}}, \rho_{\eta_{G}})$ is a bivariate
normal density with $\E(\eta_{GPi})=\mu_{\eta_{GP}}$,
$\Var(\eta_{GPi})=\sigma^{2}_{\eta_{GP}}$,
$\E(\eta_{GFi})=\mu_{\eta_{GF}}$, $\Var(\eta_{GFi})=1$ and
$\Corr(\eta_{GPi},\eta_{GFi})=\rho_{\eta_{G}}$.
The estimates
$\tilde{\boldsymbol{\phi}}_{G}$ of $\boldsymbol{\phi}_{G}$ are obtained
by maximizing (\ref{step1_ll}),
while the estimates of $\mu_{\eta_{GP}}$,
$\mu_{\eta_{GF}}$,
$\sigma^{2}_{\eta_{GP}}$ and
$\rho_{\eta_{G}}$ from this step are discarded. The estimates
$\tilde{\boldsymbol{\phi}}_{R}$ of $\boldsymbol{\phi}_{R}$ are obtained
analogously, using the data on $\mathbf{Y}_{Ri}$. We have used the
Mplus 6.12 software \citep{muthen2017mplus} to carry out this first step
of estimation.

\subsection{Estimation of the structural model} 
\label{sub:estimation_of_the_structural_model2}

In the second step of estimation, the structural parameters
$\boldsymbol{\psi}$ are estimated, treating the measurement parameters
fixed at their estimated values
$\tilde{\boldsymbol{\phi}}=(\tilde{\boldsymbol{\phi}}_{G}^\top,
\tilde{\boldsymbol{\phi}}_{R}^\top)^\top$. Here we omit
$\tilde{\boldsymbol{\phi}}$ from the notation for simplicity.

We use a Bayesian approach, implemented using MCMC methods. The estimation algorithm has a data augmentation structure which
alternates between imputing the latent variables given the observed
variables and values of the parameters, and sampling the parameters from
their posterior distributions given the observed and latent variables:

\begin{itemize}
\item
\textbf{Sampling of the latent variables}:
Let $\boldsymbol{\zeta}=(\boldsymbol{\xi},\boldsymbol{\eta})$, where
$\boldsymbol{\xi}$ denotes all the values of the latent
$\boldsymbol{\xi}_{i}$ for the units $i$ in the sample, and
$\boldsymbol{\eta}$ all the values of $\boldsymbol{\eta}_{i}$. Given the
observed data $(\mathbf{Y},\mathbf{X})$ and the most recently sampled
values of the parameters $\boldsymbol{\psi}$, sample
$\boldsymbol{\zeta}$ from
\[
  p(\boldsymbol\zeta | \mathbf{Y},\mathbf{X},  \boldsymbol\psi) \,
  \propto \, p(\mathbf{Y}|\boldsymbol\zeta)\, p(\boldsymbol\zeta | \mathbf{X}, \boldsymbol
  \psi).
\]
\item
\textbf{Sampling of the parameters}:
Given the observed data on
$\mathbf{X}$ and the most recently sampled values of the
latent variables
$\boldsymbol{\zeta}$, sample $\boldsymbol{\psi}$ from the posterior
distribution
\[
  p(\boldsymbol\psi | \boldsymbol \zeta,\mathbf{X}) \,\propto\,
  p(\boldsymbol\zeta|\mathbf{X},\boldsymbol\psi)\, p(\boldsymbol\psi),
\]
where
\[
p(\boldsymbol\zeta | \mathbf{X},\boldsymbol\psi) =
p(\boldsymbol\eta|\mathbf{X}; \boldsymbol\psi_\eta)\, p(\boldsymbol\xi |
\mathbf{X}; \boldsymbol\psi_\xi)
\]
is specified by the structural model and
$p(\boldsymbol\psi)$ is the prior distribution of
$\boldsymbol{\psi}=(\boldsymbol{\psi}_{\eta}^\top,\boldsymbol{\psi}_{\xi}^\top)^\top$.
We take $\boldsymbol{\psi}_{\eta}$ and $\boldsymbol{\psi}_{\xi}$ to be
independent a priori, so that
$p(\boldsymbol{\psi})=p(\boldsymbol\psi_\eta)p(\boldsymbol \psi_\xi)$.
The posterior distribution then divides into separate posteriors
for $\boldsymbol{\psi}_{\eta}$ and $\boldsymbol{\psi}_{\xi}$ as
\[
  p(\boldsymbol\psi | \boldsymbol \zeta,\mathbf{X})=
  p(\boldsymbol\psi_{\eta} | \boldsymbol \eta,\mathbf{X})\,
  p(\boldsymbol\psi_{\xi} | \boldsymbol \xi,\mathbf{X})
  \,\propto\,
  [p(\boldsymbol\eta | \mathbf{X}; \boldsymbol{\psi}_{\eta})
  p(\boldsymbol\psi_\eta)]\,
  [p(\boldsymbol\psi | \mathbf{X}; \boldsymbol{\psi}_{\xi})
  p(\boldsymbol\psi_\xi)].
\]
\end{itemize}
These steps further split into separate steps for different
components of $\boldsymbol{\zeta}$ and $\boldsymbol{\psi}$. For the
latent variables $\boldsymbol{\zeta}$, the regression coefficients
$\boldsymbol{\psi}_{\xi}$ for $\boldsymbol{\xi}$ and
$\boldsymbol{\beta}$ for the means of $\boldsymbol{\eta}$, and the
standard deviations $\boldsymbol{\sigma}$, these steps are similar
to the ones in \cite{kuhaetal22}, with adjustments to allow for the
facts that here $\boldsymbol{\eta}_{i}$ has four variables and that their
correlations vary by unit $i$. A description of these
steps is given in \ref{app:mcmc}. What is completely new here
is the procedure for sampling the coefficients $\boldsymbol{\alpha}$ of
the model (\ref{eta_corr_model}) for the conditional correlations of
$\boldsymbol{\eta}_{i}$. It is described in the rest of this section.

When $\boldsymbol{\alpha}$ is being sampled, all the other quantities
are taken as known and fixed at their most recently sampled values. The
latent variables $\boldsymbol{\eta}_{i}$ are thus also treated as
observed response variables in this model for their correlations. The
other parameters $\boldsymbol{\beta}$ and $\boldsymbol{\sigma}$ of the
distribution of $\boldsymbol{\eta}_{i}$ are also taken as known, and we
will omit them from the notation below. The posterior distribution that
we need is then written as
$p(\boldsymbol{\alpha}|\mathbf{X},\boldsymbol{\eta}) \,\propto\,
p(\boldsymbol{\eta}|\mathbf{X}; \boldsymbol{\alpha})\,
p(\boldsymbol{\alpha})$. As discussed in Section \ref{sub:Some Useful
Properties}, we consider this posterior over a convex and bounded set
$C_{\alpha,S_{XT}}$, where $S_{XT}$ is a finite test set of values for
$\boldsymbol{X}_{j}$ which also implies feasibility of correlations over
the full set $\mathbf{X}(S_{Z})$ that we are interested in. We specify a
joint uniform prior distribution $p(\boldsymbol\alpha) \, \propto\,
\mathbbm{1}(\boldsymbol{\alpha}\in C_{\alpha,S_{XT}})$ for
$\boldsymbol{\alpha}$ over this set.

Let $\alpha_{lm}$ denote any single element of $\boldsymbol{\alpha}$,
for $l=1,\dots,L$, $m=1,\dots,q$. The sampling algorithm updates one
$\alpha_{lm}$ at a time, taking all the other elements
$\boldsymbol{\alpha}_{-lm}$ fixed at their most recently sampled values.
Denote $\mathbf{R}_i(\alpha_{lm}) =
\mathbf{R}(\mathbf{X}_i;\alpha_{lm},\boldsymbol{\alpha}_{-lm})$ and
define the standardized residuals $\boldsymbol{\epsilon}=
[\boldsymbol{\epsilon}_1,\ldots,\boldsymbol{\epsilon}_n]^\top =
\mathbf{S}^{-1}\,(\boldsymbol{\eta}-\mathbf{X}\boldsymbol{\beta})$ where
$\boldsymbol{\eta}=[\boldsymbol{\eta}_{1},\dots,\boldsymbol{\eta}_{n}]^{\top}$
and $\mathbf{S}=\text{diag}(\sigma_{GP},\sigma_{RP},1,1)$. The
conditional posterior distribution from which $\alpha_{lm}$ should be drawn
is then
\begin{eqnarray}
    \lefteqn{p(\alpha_{lm}|\boldsymbol{\alpha}_{-lm},
    \boldsymbol{\epsilon}, \mathbf{X})
    \;\propto\; \prod_{i=1}^n \,
    p(\boldsymbol{\epsilon}_i|\boldsymbol\alpha,\mathbf{X})\,p(\alpha_{lm}|\boldsymbol{\alpha}_{-lm})}
    \nonumber\\
    &\propto& \prod_{i=1}^n \, |\mathbf{R}_i(\alpha_{lm})|^{-\frac{1}{2}}
    \, \exp\left( -\frac{1}{2}\boldsymbol{\epsilon}_i^\top
    \mathbf{R}_i(\alpha_{lm})^{-1} \boldsymbol{\epsilon}_i \right) \,
    \mathbbm{1}(a_{lm}< \alpha_{lm}< b_{lm}),
\label{eq:post_mu_dist}
\end{eqnarray}
where
$(a_{lm},b_{lm})$ is the range of $\alpha_{lm}$ in the subset of
$C_{\alpha,S_{XT}}$ given $\boldsymbol{\alpha}_{-lm}$. This involves $n$
matrix determinants and inverse operations, plus further determinants to
obtain the interval $(a_{lm},b_{lm})$ as described in Proposition
\ref{prop:conti_interval_mu} above. This would be computationally
demanding. However, these demands can be reduced because the sampling
updates only one parameter $\alpha_{lm}$ at a time. The calculation of the
determinant and inverse of $\mathbf{R}_i(\alpha_{lm})$ can be avoided
by maintaining and updating copies of working determinants and inverses.
These features are included in the general elementwise
Metropolis-Hastings (MH) procedure that we propose for sampling
$\boldsymbol\alpha$. It is given in Algorithm \ref{alg:alg1}, together with
Remarks 1--4 below. Assuming the resulting Markov chain satisfies the
standard regularity conditions
\citep[]{tierneyMarkovChainsExploring1994a,tierney1996introduction},
in which the detailed balance condition is met by construction,  \cref{alg:alg1} has the desired posterior
distribution as its unique stationary distribution.

\begin{algorithm}[!th]
  \SetAlgoLined
       1. \KwIn{
       Current parameters $\boldsymbol\alpha=(\alpha_{lm})$ for
       $l=1,\dots,L$, $m=1,\dots,q$. \\
       \hspace*{1em}For units $i=1,\dots,n$: Standardized residuals
       $\boldsymbol\epsilon_i=\mathbf{S}^{-1}\,(\boldsymbol{\eta}_i -
       \boldsymbol{\beta}^{\top}\mathbf{X}_i)$;\\
       \hspace*{2em}
       $\mathbf{R}_{i}^{-1}$ and $|\mathbf{R}_i|$ for correlation
       matrices $\mathbf{R}_{i}=\mathbf{R}(\mathbf{X}_{i};
       \boldsymbol{\alpha})$.\\
       \hspace*{1em}
       For a test set $S_{XT}=\{\mathbf{X}_{j}\,|\, j=1,\dots,T\}$: Upper triangular matrices
       $\mathbf\Gamma_j$ from\\ \hspace*{2em}the Cholesky decompositions
       $\mathbf{R}_{j}=\boldsymbol{\Gamma}_{j}^{\top}\boldsymbol{\Gamma}_{j}$
       of $\mathbf{R}_{j}=\mathbf{R}(\mathbf{X}_j;\boldsymbol{\alpha})$.
       }
2.  {\bf Metropolis-Hastings sampling:}

\For{$l = 1,\dots, L$}{

\For{$m = 1,\dots,q$}{

\paragraph{Proposal generation:}~

Calculate $(a_{lm},b_{lm})$ based on
$\mathbf{\Gamma}_1,\dots,\boldsymbol{\Gamma}_{T}$. See Remark 1
for more on this.

Generate $\alpha_{lm}'$ from a proposal distribution
$g(\alpha_{lm}'|\alpha_{lm})$.
See Remark 2 for more on how the proposal can be created.

\paragraph{Rejection:}~

Calculate $\mathbf{R}_i(\alpha_{lm}')^{-1}$ by updating
$\mathbf{R}_i(\alpha_{lm})^{-1}$ and
$|\mathbf{R}_i(\alpha_{lm}')|$ by updating
$|\mathbf{R}_i(\alpha_{lm})|$, for $i=1,...,n$; see Remark 3.

  Calculate the acceptance probability
    \[
      \pi(\alpha_{lm}\rightarrow \alpha_{lm}') =
      \min \Bigg\{1, \frac{p(\alpha_{lm}'|\boldsymbol{\alpha}_{-lm},
      \boldsymbol\epsilon,\mathbf{X})\,g(\alpha_{lm}|\alpha_{lm}')}
      {p(\alpha_{lm}|\boldsymbol{\alpha}_{-lm},
      \boldsymbol\epsilon,\mathbf{X})\, g(\alpha_{lm}'|\alpha_{lm})}\Bigg\}
    \]
    where $p(\alpha_{lm}|\boldsymbol{\alpha}_{-lm},
      \boldsymbol\epsilon,\mathbf{X})$ is given by equation
      (\ref{eq:post_mu_dist}).

    Sample $u\sim U(0,1)$.

    \If{$u > \pi(\alpha_{lm}\rightarrow \alpha_{lm}')$}{
      Reject $\alpha_{lm}'$\;
      \Continue
    }
    Accept $\alpha_{lm}'$ and update\\ \hspace*{1em}
    $\alpha_{lm}\rightarrow\alpha_{lm}'$,
    $\mathbf{R}_i(\alpha_{lm})^{-1}\rightarrow
    \mathbf{R}_i(\alpha_{lm}')^{-1}$, $|\mathbf{R}_i(\alpha_{lm})|
    \rightarrow |\mathbf{R}_i(\alpha_{lm}')|$.

    Update $\mathbf{\Gamma}_j(\alpha_{lm})\rightarrow
    \mathbf{\Gamma}_j(\alpha_{lm}')$ for $j=1,\dots,T$; see Remark 4.

}
}
      3. \KwOut{Updated $\boldsymbol\alpha$, $\mathbf{R}_{i}^{-1}$,
      $|\mathbf{R}_{i}|$ and $\boldsymbol{\Gamma}_{j}$.}
\caption{Elementwise Metropolis-Hastings procedure for sampling
$\boldsymbol{\alpha}$}
\label{alg:alg1}
\end{algorithm}

\vspace*{1ex}

\textbf{Remark 1}: \emph{Calculating feasible interval for $\alpha_{lm}$ by
Cholesky decomposition.}

\cref{prop:conti_interval_mu} and Lemma
\ref{lemma:conti_rho} in \ref{app:proofs_of_lemmas} descibe one
way of calculating the interval $(a_{lm},b_{lm})$. This requires the
calculation of $3T$ determinants of correlation matrices. An
alternative, more efficient procedure for its first steps can be
obtained by adapting a method proposed by
\cite{wongEfficientEstimationCovariance2003}.
Let $\mathbf{R}_j =\mathbf{\Gamma}_j^\top\mathbf{\Gamma}_j$ as
defined in the Input statement of \cref{alg:alg1}, where
$\boldsymbol{\Gamma}_{j}=(\gamma_{k_{1},k_{2}}^{(j)})$. Recall that $K$ denotes
the dimension of $\mathbf{R}_{j}$, and assume that $\rho_{l}$ in
Lemma \ref{lemma:conti_rho} corresponds to the $(K,K-1)$th element of
$\mathbf{R}_j$.
We then have
$g_{jl} = \sum_{k=1}^{K-2} \gamma_{k,K-1}^{(j)}\gamma_{k,K}^{(j)}$ and
    $h_{jl} = \gamma_{K-1,K-1}^{(j)}
    (1-\sum_{k=1}^{K-2}(\gamma_{k,K}^{(j)})^{2})^{1/2}$, and
$(a_{lm},b_{lm})$ can be obtained by plugging in $g_{jl}$
and $h_{jl}$ into (\ref{eq:interval_j}) in
\cref{prop:conti_interval_mu} as before.
If $\rho_l$ is not the $(K,K-1)$th element of
$\mathbf{R}_j$, we can permute the indices
with a permutation matrix $\mathbf{P}$ so that it is
the $(K,K-1)$th element of the matrix $\mathbf{P}^\top \mathbf{R}_j
\mathbf{P} = (\mathbf{\Gamma}_j
\mathbf{P})^\top(\mathbf{\Gamma}_j \mathbf{P})$, followed by
a Givens rotation by an orthogonal matrix $\mathbf{Q}$ such that
$\mathbf{Q}\mathbf{\Gamma}_j \mathbf{P} = \tilde{\mathbf{\Gamma}}_j$,
where $\tilde{\mathbf{\Gamma}}_j$ is upper-triangular and
$\mathbf{P}^\top \mathbf{R}_j \mathbf{P} =
\tilde{\mathbf{\Gamma}}_j^\top \tilde{\mathbf{\Gamma}}_j$. Then apply
the calculation above to $\tilde{\boldsymbol{\Gamma}}_{j}$.

\vspace*{1em}

\textbf{Remark 2}: \emph{Generating proposal values for $\alpha_{lm}$}.

We have used a simple random walk Metropolis sampler. It generates
the proposal through an independent Gaussian increment to the
previous value, as $\alpha_{lm}^\prime = \alpha_{lm} + \gamma_m \delta$,
where $\delta$ is drawn from the standard normal distribution.
Thus $\alpha_{lm}|\alpha_{lm}'\sim N(\alpha_{lm}', \gamma_{m}^{2})$.
The step size $\gamma_{m}$ should be chosen to achieve a good balance
between rejection rate and mixing efficiency. We have used
$\gamma_m = C(\sqrt{n}||(X_{1m},\dots,X_{nm})^{\top}||_{\infty})^{-1}$,
 where $||\cdot||_{\infty}$ is the infinity norm and $C$ is a
chosen constant, the same for all $\gamma_m$, which is used to control
rejection rates in the range 0.7--0.8.
The sampler was efficient enough in our real
data analysis when the step sizes were chosen appropriately.

An alternative would be to use the ARMS algorithm
\citep{gilksAdaptiveRejectionMetropolis1995} to adaptively construct the
proposal function of $\alpha_{lm}$ in $(a_{lm},b_{lm})$. This can
improve the acceptance rate but the algorithm may require the likelihood
function $p(\boldsymbol\epsilon|\boldsymbol\alpha,\mathbf{X})$ to be
evaluated multiple times based on the rejection condition, whereas in
the random walk Metropolis method it needs to be calculated at most once
in each iteration. Other methods for improving the acceptance rate exist
\citep[]{chibAnalysisMultivariateProbit1998}, but their implementation
is more complex and relies heavily on tuning.

\vspace*{1em}

\textbf{Remark 3}: \emph{Updating the determinant and inverse of
correlation matrix}.

Here we want to update the determinant and inverse of
$\mathbf{R}_{i}(\alpha_{lm})$ to those of
$\mathbf{R}_{i}(\alpha_{lm}')$.
Suppose that the correlation parameter $\rho_{l}$ corresponds to the
$(k_1,k_2)$th
element of $\mathbf{R}_i(\alpha_{lm})$. Let
$\varepsilon_{ilm} = (\alpha_{lm}' -
\alpha_{lm})X_{im}$, and denote by $\mathbf{w}_{1}$ and $\mathbf{w}_{2}$
the $K \times 1$ vectors which are zero except that the $k_1$th element
of $\mathbf{w}_{1}$
and the $k_2$th element of $\mathbf{w}_{2}$ are
$\sqrt{|\varepsilon_{ilm}|}$.
Then
\[
\mathbf{R}_i(\alpha_{lm}') =
\left[\mathbf{R}_i(\alpha_{lm}) +(\text{sgn}(\varepsilon_{ilm})
\mathbf{w}_{1})\mathbf{w}_{2}^\top\right] + \mathbf{w}_{2}(\text{sgn}(\varepsilon_{ilm})\mathbf{w}_{1})^\top.
\]
Since this is of the form
$(\mathbf{A}+\mathbf{u}\mathbf{v}^{\top})+\mathbf{v}\mathbf{u}^{\top}$,
$\mathbf{R}_i(\alpha_{lm}')^{-1}$ can be computed efficiently with two
rank-1 updates by applying twice the Sherman-Morrison formula

  \[
    (\mathbf{A} + \mathbf{u}\mathbf{v}^\top)^{-1} =
    \mathbf{A}^{-1}
    - \frac{\mathbf{A}^{-1}\mathbf{u}\mathbf{v}^\top
    \mathbf{A}^{-1}}{1 + \mathbf{v}^\top \mathbf{A}^{-1}\mathbf{u}}
  \]
and $|\mathbf{R}_i(\alpha_{lm}')|$ can be calculated by updating
$|\mathbf{R}_i(\alpha_{lm})|$ through two applications of
  $
    |\mathbf{A} + \mathbf{u}\mathbf{v}^\top| =
    (1+\mathbf{v}^\top
    \mathbf{A}^{-1}\mathbf{u})\,
    |\mathbf{A}|,
  $
the second of which employs the first update of the inverse.
These steps reduce the computation complexity of
$p(\boldsymbol{\epsilon}_{i}|\boldsymbol{\alpha},\mathbf{X})$
from $O(K^3)$ to $O(K^2)$.

\vspace*{1em}

\textbf{Remark 4}: \emph{Updating the Cholesky decomposition of a
correlation matrix}.

Let $\varepsilon_{jlm} = (\alpha_{lm}' - \alpha_{lm})X_{jm}$. Let
$\mathbf{w}_{1}$ and
$\mathbf{w}_{2}$ be defined as in Remark 3, and define $\mathbf{w}$ as
the $K\times 1$ vector where the $k_1$th and $k_2$th
elements are $\sqrt{|\varepsilon_{jlm}|}$ and the other elements are
zero. Then we can write
\[
\mathbf{R}_j(\alpha_{lm}') = \left[\left(\mathbf{R}_j(\alpha_{lm})
+\text{sgn}(\varepsilon_{jlm}) \mathbf{w}\mathbf{w}^\top\right) -
\text{sgn}(\varepsilon_{jlm}) \mathbf{w}_{1}\mathbf{w}_{1}^\top\right]
 -
\text{sgn}(\varepsilon_{jlm})\mathbf{w}_2\mathbf{w}_2^\top.
\]
The Cholesky decomposition of $\mathbf{R}_j(\alpha_{lm}')$ can be
computed efficiently from this, with three rank-1 updates for the Cholesky decomposition of the
form $\mathbf{A}+ \mathbf{u}\mathbf{u}^{\top}$ or $\mathbf{A} - \mathbf{u}\mathbf{u}^{\top}$\citep[]{Seeger:161468}; built-in
functions for this are available in Matlab and linear algebra libraries
like Eigen \citep[]{eigenweb}. This updating rule reduces the computation complexity of the Cholesky decomposition $\mathbf{R}_j=\mathbf{\Gamma}_j^\top\mathbf{\Gamma}_j$ from $O(K^3)$ to $O(K^2).$

\vspace*{1ex}

Alternatives to Algorithm \ref{alg:alg1} could also be considered. In
 cases when $\mathbf{X}=\mathbf{X}(\mathbf{Z})$ is a complex function
 such as a cubic spline, for better efficiency the element-wise MH
 algorithm could be replaced with a blockwise algorithm where subvectors
 of $\boldsymbol\alpha$ can be proposed and rejected together. Apart
 from the MH algorithm we use in this paper, we note that the ``griddy
 Gibbs'' sampler discussed in \cite{johnbar2000} also works here in
 principle, where the feasible intervals for each $\alpha_{lm}$ can be
 discretized into grids. However, the computational efficiency for
 evaluating posterior function over these grids may suffer.

\section{Analysis of child-parent exchanges of support} 
\label{sec:analysis_of_the_help_exchanges}

\subsection{Introduction and research questions}
\label{subsec:background_and_RQs}

The model defined in Section \ref{sec:latent_variable_model} was fitted
to the UKHLS data on exchanges of support between respondents and their
non-coresident parents that were introduced in Section
\ref{sec:data_on_help_exchanges}, using the method of estimation that
was described in Section \ref{sec:estimation_of_the_latent_variable_model}. Here
receiving and giving help are modelled jointly, treating
practical and financial support as distinct but correlated outcomes. We
investigate the following research questions:
\begin{enumerate}[(a)]
\item What individual characteristics are associated with
higher or lower levels of giving practical and financial help to the
parents, and receiving such help from the parents?\\[-4ex]
\item To what extent are exchanges reciprocated and how does reciprocity
vary according to individual characteristics?\\[-4ex]
\item Are practical and financial support substitutes for one another or
are they complementary, and how does this depend on individual
characteristics?
\end{enumerate}
Questions (b) and (c) refer to within-person correlations between the
helping tendencies. For (b), higher levels of reciprocity would
correspond to positive correlations between giving and receiving help.
For (c), positive correlations between the tendencies to give (or to
receive) practical and financial help would suggest that the two types
of support are complementary (i.e.\ given together), and negative
correlations that they are substitutes.

Estimates $\tilde{\boldsymbol{\phi}}$ of the parameters of the
measurement model were obtained first, as explained in Section
\ref{sub:estimation_of_the_measurement_model}. They are shown in Table
S1 of the supplementary materials. The loading parameters are positive,
so the latent variables $\eta_{GP}$ and $\eta_{RP}$ are defined so that
larger values of them imply higher tendencies to give and receive
practical help (and the same is true by construction for the financial
help variables $\eta_{GF}$ and $\eta_{RF}$). The measurement parameters
were then fixed at $\tilde{\boldsymbol{\phi}}$ in the estimation of the
rest of the model below.

The structural model for the joint distribution of the latent variables
was estimated using the MCMC algorithm described in Section
\ref{sub:estimation_of_the_structural_model2} and \ref{app:mcmc}.
Estimated parameters and some predicted values for these models are
shown in Tables \ref{t_model_eta1}-\ref{t_model_corr_pred} and in Tables
S2--S3 of the supplementary materials. They are based on a sample of
380,000 draws of the parameters $\boldsymbol{\psi}$, obtained by pooling
two MCMC chains of 200,000 iterations each, with a burn-in sample of
10,000 omitted from each chain.  Convergence was assessed by visual
inspection of trace plots of the two chains which suggested adequate
mixing. In the role of the set of interest $S_{Z}$ for the covariates
(as defined in Section \ref{sub:Some Useful Properties}), we used the
simple choice of the set of all the $n$ observed values of
$\mathbf{Z}_i$ in the data, and as the test set $S_{XT}$ all the
distinct values of $\mathbf{X}_{i}=\mathbf{X}(\mathbf{Z}_{i})$ implied
by them.

Estimated parameters of the multinomial logistic model
(\ref{eq:xi_dist}) for the joint distribution of the binary latent class
variables $(\xi_G,\xi_R)$ are shown in Table S2, and fitted class
probabilities $p(\xi_G = 1)$ and $p(\xi_R = 1)$ from it given different
values of the covariates in Table~S3. This model component is included
primarily to allow for zero-inflation in the observed item responses, so
it is not our main focus. We could, however, also interpret the classes
defined by $\xi_G = 1$ and $\xi_R = 1$ as latent sub-populations of
`givers' and `receivers' of help respectively. The estimated overall
proportions of these classes, averaged over the sample distribution of
the covariates, are 0.67 for `givers' and 0.62 for `receivers'.

The focus of interest is the model for the joint distribution of
$\boldsymbol{\eta}=(\eta_{GP},\eta_{GF},\eta_{RP},\eta_{RF})^{\top}$,
which we interpret as continuous latent tendencies for the adult
respondents to give practical and financial help to and receive help
from their non-coresident parents, after accounting for the
zero-inflation. We consider first results for the linear model
(\ref{eta_mean}) for the means of $\boldsymbol{\eta}$, which is used to
answer research question (a), and then discuss estimates of the model
(\ref{eta_corr})--(\ref{eta_corr_model}) for their correlations,
corresponding to questions (b) and (c).

\subsection{Predictors of levels of giving and receiving help}
\label{subsec:predictors_of_xi_and_eta}

Table \ref{t_model_eta1} shows the estimated coefficients of the
predictors of the means of practical ($\eta_{GP}$) and financial
($\eta_{GF}$) help given by respondents to parents. There is little
evidence that the respondent's partnership status or the presence or age
of their children are associated with the tendency to give help. Women
tend to give more practical help than men, but there is no gender
difference in giving financial help. Indicators of lower socioeconomic
status or a more difficult economic situation of the respondent (lower
education, not being a homeowner, lower household income, and not being
employed) are associated with a higher tendency to give practical help,
while having more education and higher household income predict a higher
tendency to give financial help. These results are consistent with a
pattern where children give help to the best of their ability, with less
well-off children giving, on average, relatively more practical support
and less financial support (this does not, however, tell us about
possible substitution of types of help by the same person; for that, we
will turn to the within-person correlations in the next section).
However, the results for household tenure and employment status (where
home owners and the employed also tend to give less financial help)
deviate from this pattern, after controlling for education and income.
There is also some evidence that respondents with one sibling give less
help than those with none, which could suggest some sharing of support
between the siblings (although there is no similar reduction for those
with more siblings).

Having a parent who lives alone and older parental age are positively
associated with giving both forms of help, with the positive association
between age and financial help emerging when the oldest parent reaches
their early 70s. Both of these findings are consistent with children
giving help according to parental need. After controlling for  parental
age, the (correlated) respondent's age has an inverse U-shaped
relationship with giving both practical and financial help, with highest
levels of giving at around ages 43 and 49 respectively. Finally,
respondents who live more than an hour away from the nearest parent have
a lower tendency to give practical help, but a higher tendency to give
financial help. As for the effects of socioeconomic status, the
different directions of these associations suggest differences in the
mix of different types of help related to the giver's circumstances, in
this case according to how feasible it is to provide practical help.

Covariate effects on levels of practical and financial help that the
respondents receive from their parents (variables $\eta_{RP}$ and
$\eta_{RF}$) are shown in Table \ref{t_model_eta2}. Women tend to
receive more of both types of support than men. Expected levels of
support from parents are also higher for respondents who are not
employed, have less education, or have no coresident partner, all of
which can be taken to indicate higher levels of need for support.
Respondents with two or more siblings tend to receive less of either
form of help than those from one or two-child families, which may
reflect greater competition for parental resources in larger families.
For financial help, the tendency to receive such help is higher for
respondents who have lower household income or who rent rather than own
their homes, as well as for those with very young or secondary school
age children. These associations are also consistent with parents
providing financial assistance to children who are most in need.

Levels of both practical and financial help received decline with the
respondent's age, which is consistent with reduced need by respondents.
As a function of the oldest parent's age, receipt of practical help also
declines from age 67 onwards, but the tendency to receive financial help
increases with parental age. This may be interpreted as another instance
of the balance of different types of help depending on the giver's
capacities, in this case with older parents being more able to give
financial than practical support. Finally, longer travel time between
the respondent and their nearest parent is associated with less
practical and more financial help, as it was also for help from
respondents to parents.

\subsection{Models for the correlations: Predictors of symmetry in
exchanges and complementarity of practical and financial help}
\label{subsec:predictors_of_correlations}

Estimated coefficients ($\hat{\boldsymbol{\alpha}}$) of the model
(\ref{eta_corr})--(\ref{eta_corr_model}) for the residual correlations
of $\boldsymbol{\eta}= (\eta_{GP},\eta_{GF},\eta_{RP},\eta_{RF})$ are
shown in Table \ref{t_model_corr_coeff} and some fitted correlations
from these models in Table \ref{t_model_corr_pred}. Here we included as
covariates the respondent's age, age squared, gender, household income,
and travel time to the nearest parent. Whereas the models in Section
\ref{subsec:predictors_of_xi_and_eta} concern the mean of each helping
tendency separately, these correlations focus on their joint
distribution for a given child-parent dyad, over and above the levels
predicted by the mean models. They can be used to investigate research
questions (b) and (c) above.

The four correlations between the tendencies to give and receive help
(of the same or different type) can be viewed as measures of reciprocity
or symmetry in exchanges between children and their parents (question
b). Results for them are given in the first four columns of each table,
where for ease of interpretation we focus on the fitted values in
Table~\ref{t_model_corr_pred}. Consider first the correlations averaged
over the sample distribution of the covariates, shown on the first row.
There is a moderate positive correlation of 0.38 between giving and
receiving practical help (GP $\leftrightarrow$ RP). In other words, when
a child has a high tendency to give practical help to their parent(s),
relative to what would be predicted by their own and the parents'
characteristics, they also tend to receive a higher-than-average level
of support from the parents. This suggests a fair amount of reciprocity
in practical help. The other three correlations are weaker, indicating
little dyad-level reciprocity in anything other than practical help.
What is not observed here are any substantial negative correlations.
They would indicate that when the tendency to help is high in one
direction it is low in the other, as would happen for example if help
was given only in the direction of greater need. This is not seen here
even for giving and receiving financial help, even though we might have
expected financial exchanges to be largely unidirectional. One possible
explanation of this is that the single financial support item covers
also small sums of money, which may be exchanged more frequently and
symmetrically than large ones.

The (GP $\leftrightarrow$ RP) correlation is also the one for which we
see the most noticeable covariate effects, as illustrated by the other
rows of Table \ref{t_model_corr_pred}. It declines sharply with age, and
it is significantly higher for men than for women and among parents and
children who live farther apart. Reciprocity in practical support is
highest at younger ages of the adult children. We note that this
captures a different aspect of the effects of age than the mean models
in Section \ref{subsec:predictors_of_xi_and_eta}. There respondent's age
was negatively associated with tendency to give practical help and
positively associated (up to age around 43) with tendency to receive it.
Thus younger individuals tend to give less practical help and receive
more of it, and the expected balance of support is more toward help from
parents to children, than is the case at other ages (comparable
conclusions were reached in a different way by \citealt{mudrazija:16},
who considered net financial values of the differences between these two
directions). The residual correlations show that, around these expected
levels, for younger respondents the level of practical help that they do
(or do not) give is particularly strongly predictive of how much support
they receive. Similarly, the gender difference in the correlation
suggests that men are more likely than women to engage in two-way
exchanges or not exchange practical help at all.

The only other clearly significant covariate effects on the correlations
that relate to reciprocity are those between within-dyad distance and
the (GP $\leftrightarrow$ RF), (GF $\leftrightarrow$ RF) and (GP
$\leftrightarrow$ RP) correlations. Recall that the models for the means
showed that the balance of the expected levels of different types of
help moves towards more financial and less practical support when the
child and the parent(s) live further apart. Of the residual correlations
here, (GP $\leftrightarrow$ RP) is quite strongly positive when the
distance is longer vs.\ less positive when it is shorter, while (GP
$\leftrightarrow$ RF) is near zero vs.\ moderately positive and (GF
$\leftrightarrow$ RF) moderately negative vs.\ near zero similarly (and
GF $\leftrightarrow$ RP is always small). One possible interpretation of
these different patterns is that among children and parents who live
further apart providing practical support requires a greater effort and
the tendency to give such support may be higher when reciprocated. For
such dyads, financial help may also more often involve one-way (and
perhaps larger) transfers which are less often (and less easily)
reciprocated by practical help.

The two remaining correlations, between the tendencies to give financial
and practical help and between the tendencies to receive financial and
practical help, are used to examine whether one form of help that a
person may give serves as a substitute for the other or whether they are
complementary, and whether this varies according to individual
characteristics (research question c). Here the mean models in Section
\ref{subsec:predictors_of_xi_and_eta} also give information about one
version of this question, when they show that the expected balance of
the two types of help is, on average, different for dyads with different
characteristics. This is most obvious when the coefficient of a
covariate has different signs for practical and financial help, as it
does for example for the distance between respondent and their parents
(a similar result for expected levels of financial vs.\ time assistance
given distance was found by \citealt{bonsang:07} in a cross-national
European study). However, this is again not the same as the question of
substitution for a person, i.e.\ whether the level of one kind of help
that he or she gives predicts higher or lower levels of the other kind
of help.

Results for the correlations that address this question are given in the
final two columns of Tables \ref{t_model_corr_coeff} and
\ref{t_model_corr_pred}. The fitted correlations are positive overall
and in all sub-groups defined by the covariates. This indicates clearly
that within a person the types of help are not supplementary but
complementary: a person (child or parent) who has a high tendency to
give one kind of help (relative to what would be expected given the
characteristics of their dyad) also has a high tendency to give the
other kind of help. The most noticeable covariate effect that holds for
both children and the parents is that the degree of complementarity in
practical and financial help is greater when the child-parent distance
is small. For help received from the parents, complementarity also
declines with the respondent's (and thus in effect also the parents')
age. This suggests that at older ages the parents more often tend to
limit the support that they give to one of these types (mostly likely
financial help, in light of the results in Table \ref{t_model_eta2})
rather than both of them.

In conclusion, we return to the research questions that were stated in
Section \ref{subsec:background_and_RQs}. The first question was
addressed by the models for the mean levels of helping tendencies in
Section \ref{subsec:predictors_of_xi_and_eta}. Their results may be
summarised in terms of two broad types of characteristics: the
capacities of a giver of support and the level of need of the recipient.
The model results indicate clearly that recipients with higher level of
need (such as children with less privileged socioeconomic status or
parents who are older or live alone) tend to receive more support. For
capacities of giving, the results are more subtle. There is no strong
evidence that lower capacity is associated with less help given in some
overall sense. Instead, different types of individuals tend to give the
types of help that they are best able to give, e.g.\ with less wealthy
children giving relatively more practical than financial help to their
parents, and older parents providing relatively more financial help to
their children.

The other two research questions correspond to the models for residual
correlations in this section. The results show evidence of reciprocity
between children and parents in practical help, and of within-person
complementarity in giving different types of help. A prominent covariate
effect on these correlations was found for the distance between children
and their parents, with different patterns of correlations between
helping tendencies of different types and directions for children who
lived far from rather than close to their parents.

\begin{table}[!htbp]
\centering
\caption{
Estimated parameters of the linear model for the expected value of the tendency to
give practical help ($\eta_{GP}$) and to give financial help
($\eta_{GF}$) to individuals' non-coresident parents.
The estimates are posterior means from MCMC samples (with posterior standard
deviations in parentheses).
}

\vspace*{1ex}
\label{t_model_eta1}
\begin{tabular}{lrrcrr}
\hline
&
\multicolumn{2}{l}{Giving} &&
\multicolumn{2}{l}{Giving} \\
&
\multicolumn{2}{l}{practical help} &&
\multicolumn{2}{l}{financial help} \\
&  Estimate & (s.d.) & & Estimate & (s.d.)\\
\hline
\rule{0pt}{3ex}\emph{Estimated coefficients:}
& $\hat{\boldsymbol{\beta}}_{GP}$ &&&
 $\hat{\boldsymbol{\beta}}_{GF}$ &
\\[1ex]
Intercept &  $-0.70^{***}$ & $(0.18)$  &&  $-2.35^{***}$ & $(0.31)$ \\[.5ex]
\multicolumn{2}{l}{\textbf{Respondent (child) characteristics}} & &\\
Age$^{\dagger}$ ($\times 10$)
& $0.03^{\phantom{***}}$ & $(0.03)$  &&  $0.12^{***}$ & $(0.04)$ \\[.5ex]
Age squared$^{\dagger}$ ($\times 10^3$) & $-0.60^{***}$ & $(0.12)$  &&  $-0.70^{***}$ & $(0.19)$ \\[.5ex]
Gender    &                &&                &           & \\
~~Female (vs.\ Male) & $0.41^{***}$ & $(0.03)$  &&  $0.03^{\phantom{***}}$ & $(0.04)$ \\[.5ex]
Partnership status              &               &   &           & \\
~~Partnered (vs.\ Single) & $-0.04^{\phantom{***}}$ & $(0.03)$  &&  $0.01^{\phantom{***}}$ & $(0.05)$ \\[.5ex]
\multicolumn{6}{l}{Age of youngest coresident child (vs.\ No children):}\\
~~0--1 years    & $-0.08^{\phantom{***}}$ & $(0.06)$  &&  $-0.05^{\phantom{***}}$ & $(0.09)$ \\
~~2--4 years    & $0.01^{\phantom{***}}$ & $(0.05)$  &&  $0.03^{\phantom{***}}$ & $(0.08)$ \\
~~5--10 years   & $0.02^{\phantom{***}}$ & $(0.04)$  &&  $0.09^{\phantom{***}}$ & $(0.07)$ \\
~~11--16 years  & $-0.04^{\phantom{***}}$ & $(0.05)$  &&  $-0.10^{\phantom{***}}$ & $(0.07)$ \\
~~17-- years   & $0.03^{\phantom{***}}$ & $(0.04)$  &&  $-0.03^{\phantom{***}}$ & $(0.06)$ \\[.5ex]
Number of siblings (vs.\ None) & &  \\
~~1 & $-0.08^{*\phantom{**}}$ & $(0.04)$  &&  $-0.12^{*\phantom{**}}$ & $(0.07)$ \\
~~2 or more     & $0.00^{\phantom{***}}$ & $(0.04)$  &&  $0.06^{\phantom{***}}$ & $(0.07)$ \\[.5ex]
Longstanding illness (vs.\ No) & $0.07^{*\phantom{**}}$ & $(0.04)$  &&  $0.07^{\phantom{***}}$ & $(0.06)$ \\[.5ex]
\multicolumn{2}{l}{Employment status (vs.\ Employed)} &           &           & \\
~~Not employed & $0.21^{***}$ & $(0.03)$  &&  $0.11^{**\phantom{*}}$ & $(0.05)$ \\[.5ex]
Education (vs.\ Secondary or less)
& &           &           & \\
~~Post-secondary  & $-0.05^{**\phantom{*}}$ & $(0.03)$  &&  $0.12^{***}$ & $(0.04)$ \\[.5ex]
Household tenure (vs.\ Renter) &&           &           & \\
~~Own home outright or with mortgage & $-0.17^{***}$ & $(0.03)$  &&
$-0.19^{***}$ & $(0.05)$ \\[.5ex]
Logarithm of household equivalised income& $-0.04^{**\phantom{*}}$ & $(0.02)$  &&  $0.09^{***}$ & $(0.03)$ \\[8pt]
\multicolumn{3}{l}{\textbf{Parent characteristics}}         &        &\\
Age of the oldest living parent$^{\dagger}$ ($\times 10$)
& $0.28^{***}$ & $(0.02)$  &&  $-0.02^{\phantom{***}}$ & $(0.04)$ \\[.5ex]
Age of the oldest parent squared$^{\dagger}$ ($\times 10^{3}$)  & $0.52^{***}$ & $(0.11)$  &&  $0.63^{***}$ & $(0.17)$ \\[.5ex]
At least one parent lives alone (vs.\ No) & $0.33^{***}$ & $(0.03)$  &&
$0.24^{***}$ & $(0.04)$ \\[8pt]
\multicolumn{3}{l}{\textbf{Child-parent characteristics}} &&& \\
Travel time to the nearest parent &&&&\\
~~More than 1 hour (vs.\ 1 hour or less)  & $-0.43^{***}$ & $(0.04)$  &&  $0.14^{**\phantom{*}}$ & $(0.05)$ \\[8pt]
\emph{Residual s.d.:}
& $\hat{\sigma}_{GP}$ &&&&\\
& $0.73^{\phantom{***}}$ &$(0.01)$ && 1 &\\
\hline
\multicolumn{6}{l}{\small The posterior credible interval excludes zero at level 90\% (*), 95\% (**) or 99\% (***).}\\
\multicolumn{6}{l}{\small $\dagger$ Age of respondent is centered at 40, and age of oldest living parent at 70.}
\end{tabular}
\end{table}

\begin{table}[!htbp]
\centering
\caption{
Estimated parameters of the linear model for the expected value of the tendency to
receive practical help ($\eta_{RP}$) and to receive financial help
($\eta_{RF}$) from individuals' non-coresident parents.
The estimates are posterior means from MCMC samples (with posterior standard
deviations in parentheses).
}

\vspace*{1ex}
\label{t_model_eta2}
\begin{tabular}{lrrcrr}
\hline
&
\multicolumn{2}{l}{Receiving} &&
\multicolumn{2}{l}{Receiving} \\
&
\multicolumn{2}{l}{practical help} &&
\multicolumn{2}{l}{financial help} \\
&  Estimate & (s.d.) & & Estimate & (s.d.)\\
\hline
\rule{0pt}{3ex}\emph{Estimated coefficients:}
& $\hat{\boldsymbol{\beta}}_{RP}$ &&&
 $\hat{\boldsymbol{\beta}}_{RF}$ &
\\[1ex]
Intercept &  $-2.17^{***}$ & $(0.23)$  &&  $1.03^{***}$ & $(0.34)$ \\[.5ex]
\multicolumn{2}{l}{\textbf{Respondent (child) characteristics}} & &\\
Age$^{\dagger}$ ($\times 10$)
& $-0.26^{***}$ & $(0.03)$  &&  $-0.28^{***}$ & $(0.04)$ \\[.5ex]
Age squared$^{\dagger}$ ($\times 10^3$)  & $-0.16^{\phantom{***}}$ & $(0.18)$  &&  $-0.46^{*\phantom{**}}$ & $(0.23)$ \\[.5ex]
Gender    &                &&                &           & \\
~~Female (vs.\ Male) & $0.27^{***}$ & $(0.03)$  &&  $0.15^{***}$ & $(0.04)$ \\[.5ex]
Partnership status              &               &   &           & \\
~~Partnered (vs.\ Single) & $-0.35^{***}$ & $(0.04)$  &&  $-0.30^{***}$ & $(0.05)$ \\[.5ex]
\multicolumn{6}{l}{Age of youngest coresident child (vs.\ No children):}\\
~~0--1 years    & $0.02^{\phantom{***}}$ & $(0.05)$  &&  $0.14^{*\phantom{**}}$ & $(0.07)$ \\
~~2--4 years    & $-0.03^{\phantom{***}}$ & $(0.05)$  &&  $0.07^{\phantom{***}}$ & $(0.06)$ \\
~~5--10 years   & $-0.09^{**\phantom{*}}$ & $(0.04)$  &&  $-0.02^{\phantom{***}}$ & $(0.06)$ \\
~~11--16 years  & $-0.11^{*\phantom{**}}$ & $(0.06)$  &&  $0.18^{**\phantom{*}}$ & $(0.07)$ \\
~~17-- years   & $-0.12^{\phantom{***}}$ & $(0.07)$  &&  $0.02^{\phantom{***}}$ & $(0.09)$ \\[.5ex]
Number of siblings (vs.\ None) & &  \\
~~1 & $0.00^{\phantom{***}}$ & $(0.05)$  &&  $-0.07^{\phantom{***}}$ & $(0.07)$ \\
~~2 or more     & $-0.14^{***}$ & $(0.05)$  &&  $-0.25^{***}$ & $(0.06)$ \\[.5ex]
Longstanding illness (vs.\ No) & $0.03^{\phantom{***}}$ & $(0.05)$  &&  $0.06^{\phantom{***}}$ & $(0.06)$ \\[.5ex]
\multicolumn{2}{l}{Employment status (vs.\ Employed)} &           &           & \\
~~Not employed & $0.21^{***}$ & $(0.04)$  &&  $0.14^{**\phantom{*}}$ & $(0.05)$ \\[.5ex]
Education
(vs.\ Secondary or less)
& &           &           & \\
~~Post-secondary  & $-0.06^{**\phantom{*}}$ & $(0.03)$  &&  $-0.06^{\phantom{***}}$ & $(0.04)$ \\[.5ex]
Household tenure (vs.\ Renter) &&           &           & \\
~~Own home outright or with mortgage & $0.08^{**\phantom{*}}$ & $(0.03)$
&&  $-0.34^{***}$ & $(0.05)$ \\[.5ex]
Logarithm of household equivalised income& $0.01^{\phantom{***}}$ & $(0.02)$  &&  $-0.14^{***}$ & $(0.03)$ \\[8pt]
\multicolumn{3}{l}{\textbf{Parent characteristics}}         &        &\\
Age of the oldest living parent$^{\dagger}$ ($\times 10$)
& $-0.03^{\phantom{***}}$ & $(0.03)$  &&  $0.22^{***}$ & $(0.04)$ \\[.5ex]
Age of the oldest parent squared$^{\dagger}$ ($\times 10^{3}$)  & $-0.44^{***}$ & $(0.15)$  &&  $0.28^{\phantom{***}}$ & $(0.19)$ \\[.5ex]
At least one parent lives alone (vs.\ No) & $-0.05^{\phantom{***}}$ &
$(0.03)$  &&  $0.04^{\phantom{***}}$ & $(0.04)$ \\[8pt]
\multicolumn{3}{l}{\textbf{Child-parent characteristics}} &&& \\
Travel time to the nearest parent &&&&\\
~~More than 1 hour (vs.\ 1 hour or less)  & $-0.42^{***}$ & $(0.05)$  &&  $0.27^{***}$ & $(0.06)$ \\[8pt]
\emph{Residual s.d.:}
& $\hat{\sigma}_{RP}$ &&&&\\
& $0.68^{\phantom{***}}$ &$(0.02)$ && 1&\\
\hline
\multicolumn{6}{l}{\small The posterior credible interval excludes zero at level 90\% (*), 95\% (**) or 99\% (***).}\\
\multicolumn{6}{l}{\small $\dagger$ Age of respondent is centered at 40, and age of oldest living parent at 70.}
\end{tabular}
\end{table}

\begin{table}[!htbp]
  \centering
\caption{Estimated coefficients ($\hat{\boldsymbol{\alpha}}$) of the
model for the residual correlations of the tendencies to give and receive practical help (GP and RP)
and to give and receive financial help (GF and RF).
The estimates are posterior means from MCMC samples (with posterior standard
deviations in parentheses).}
\vspace*{1ex}
\label{t_model_corr_coeff}
  \small
      \begin{tabular}{lrrrr|rr}
      \hline
      & \multicolumn{6}{c}{Correlation}\\
       & \multicolumn{1}{c}{GP$\leftrightarrow$RP} & \multicolumn{1}{c}{GP$\leftrightarrow$RF} & \multicolumn{1}{c}{GF$\leftrightarrow$RP} & \multicolumn{1}{c}{GF$\leftrightarrow$RF} & \multicolumn{1}{c}{GP$\leftrightarrow$GF} & \multicolumn{1}{c}{RP$\leftrightarrow$RF}\\
      \hline
      \rule{0pt}{3ex}Intercept  & $0.087^{\phantom{***}}$ & $0.166^{\phantom{***}}$ & $-0.133^{\phantom{***}}$ & $-0.126^{\phantom{***}}$ & $0.475^{***}$ & $0.148^{\phantom{***}}$\\
      & $(0.171)^{\phantom{**}}$ & $(0.186)^{\phantom{**}}$ & $(0.220)^{\phantom{**}}$ & $(0.159)^{\phantom{**}}$ & $(0.174)^{\phantom{**}}$ & $(0.197)^{\phantom{**}}$\\ [8pt]
     Age of respondent$^{\dagger}$ & $-0.014^{***}$ & $0.004^{*\phantom{**}}$ & $0.003^{\phantom{***}}$ & $-0.001^{\phantom{***}}$ & $-0.002^{\phantom{***}}$ & $-0.009^{***}$\\
      & $(0.002)^{\phantom{**}}$ & $(0.002)^{\phantom{**}}$ & $(0.003)^{\phantom{**}}$ & $(0.003)^{\phantom{**}}$ & $(0.002)^{\phantom{**}}$ & $(0.002)^{\phantom{**}}$\\ [8pt]
     Age squared$^{\dagger}$ ($\times 10^{3}$)  & $-0.277^{**\phantom{*}}$ & $-0.137^{\phantom{***}}$ & $0.001^{\phantom{***}}$ & $0.159^{\phantom{***}}$ & $-0.112^{\phantom{***}}$ & $-0.251^{*\phantom{**}}$\\
      & $(0.124)^{\phantom{**}}$ & $(0.149)^{\phantom{**}}$ & $(0.178)^{\phantom{**}}$ & $(0.184)^{\phantom{**}}$ & $(0.129)^{\phantom{**}}$ & $(0.133)^{\phantom{**}}$\\ [8pt]
     Female  & $-0.151^{***}$ & $-0.025^{\phantom{***}}$ & $-0.119^{*\phantom{**}}$ & $-0.103^{*\phantom{**}}$ & $-0.080^{*\phantom{**}}$ & $0.044^{\phantom{***}}$\\
      & $(0.044)^{\phantom{**}}$ & $(0.047)^{\phantom{**}}$ & $(0.063)^{\phantom{**}}$ & $(0.062)^{\phantom{**}}$ & $(0.043)^{\phantom{**}}$ & $(0.046)^{\phantom{**}}$\\ [8pt]
     Travel time to   & $0.141^{***}$ & $-0.206^{***}$ & $-0.119^{\phantom{***}}$ & $-0.226^{***}$ & $-0.273^{***}$ & $-0.252^{***}$\\
      nearest parent $>1$hr& $(0.051)^{\phantom{**}}$ & $(0.058)^{\phantom{**}}$ & $(0.080)^{\phantom{**}}$ & $(0.076)^{\phantom{**}}$ & $(0.056)^{\phantom{**}}$ & $(0.055)^{\phantom{**}}$\\ [8pt]
     Log(household income)  & $0.044^{***}$ & $0.007^{\phantom{***}}$ & $0.025^{\phantom{***}}$ & $0.017^{\phantom{***}}$ & $0.003^{\phantom{***}}$ & $0.017^{\phantom{***}}$\\
      & $(0.017)^{\phantom{**}}$ & $(0.019)^{\phantom{**}}$ & $(0.022)^{\phantom{**}}$ & $(0.016)^{\phantom{**}}$ & $(0.018)^{\phantom{**}}$ & $(0.020)^{\phantom{**}}$\\ [8pt]
      \hline
      \multicolumn{7}{p{.9\textwidth}}{
      \small{
      The posterior credible interval excludes
      zero at level 90\% (*), 95\% (**) or 99\% (***).}}\\
      \multicolumn{7}{p{.9\textwidth}}{\small $\dagger$ Age of respondent is centered at 40.}
    \end{tabular}
\end{table}

\begin{table}[!htbp]
  \centering
  \caption{
  Fitted residual correlations calculated using the parameter estimates
  in \cref{t_model_corr_coeff}, averaged over parameter values
  in the MCMC samples and over covariate values
  in the analysis sample. The `Overall' values are
averaged over sample values of all the covariates, and the other fitted
values over the sample values of all the covariates except for the one
fixed at the specified value.}
  \vspace*{1ex}
  \label{t_model_corr_pred}
  \small
      \begin{tabular}{lrrrr|rr}
      \hline
      Covariate & \multicolumn{6}{c}{Correlation}\\
      setting & \multicolumn{1}{c}{GP$\leftrightarrow$RP} & \multicolumn{1}{c}{GP$\leftrightarrow$RF} & \multicolumn{1}{c}{GF$\leftrightarrow$RP} & \multicolumn{1}{c}{GF$\leftrightarrow$RF} & \multicolumn{1}{c}{GP$\leftrightarrow$GF} & \multicolumn{1}{c}{RP$\leftrightarrow$RF}\\
      \hline
      \rule{0pt}{3ex}Overall& $0.38^{\phantom{***}}$ & $0.16^{\phantom{***}}$ & $0.02^{\phantom{***}}$ & $-0.06^{\phantom{***}}$ & $0.36^{\phantom{***}}$ & $0.20^{\phantom{***}}$ \\[.5ex]
      \emph{Age of respondent} &&&&&& \\
      \hspace*{1em}35 years&  $0.53^{\phantom{***}}$ & $0.14^{\phantom{***}}$ & $0.00^{\phantom{***}}$ & $-0.07^{\phantom{***}}$ & $0.39^{\phantom{***}}$ & $0.31^{\phantom{***}}$ \\
      \hspace*{1em}45 years&  $0.39^{\phantom{***}}$ & $0.18^{\phantom{***}}$ & $0.03^{\phantom{***}}$ & $-0.08^{\phantom{***}}$ & $0.37^{\phantom{***}}$ & $0.22^{\phantom{***}}$ \\
      \hspace*{1em}55 years&  $0.20^{\phantom{***}}$ & $0.19^{\phantom{***}}$ & $0.06^{\phantom{***}}$ & $-0.06^{\phantom{***}}$ & $0.32^{\phantom{***}}$ & $0.08^{\phantom{***}}$ \\[.5ex]
      \emph{Gender} &&&&&&\\
      \hspace*{1em}Female&  $0.31^{\phantom{***}}$ & $0.14^{\phantom{***}}$ & $-0.03^{\phantom{***}}$ & $-0.10^{\phantom{***}}$ & $0.32^{\phantom{***}}$ & $0.22^{\phantom{***}}$ \\
      \hspace*{1em}Male&  $0.47^{\phantom{***}}$ & $0.17^{\phantom{***}}$ & $0.09^{\phantom{***}}$ & $0.00^{\phantom{***}}$ & $0.40^{\phantom{***}}$ & $0.18^{\phantom{***}}$ \\[.5ex]
      \multicolumn{3}{l}{\emph{Travel time to the nearest parent}} &&&&\\
      \hspace*{1em}$>$ 1 hr& $0.48^{\phantom{***}}$ & $0.01^{\phantom{***}}$ & $-0.06^{\phantom{***}}$ & $-0.22^{\phantom{***}}$ & $0.16^{\phantom{***}}$ & $0.02^{\phantom{***}}$ \\
      \hspace*{1em}$\leq$ 1 hr& $0.34^{\phantom{***}}$ & $0.21^{\phantom{***}}$ & $0.05^{\phantom{***}}$ & $0.00^{\phantom{***}}$ & $0.43^{\phantom{***}}$ & $0.27^{\phantom{***}}$ \\[.5ex]
      \multicolumn{4}{l}{\emph{Logarithm of household equivalised income}} &&&\\
      \hspace*{1em}25th percentile& $0.37^{\phantom{***}}$ & $0.15^{\phantom{***}}$ & $0.02^{\phantom{***}}$ & $-0.06^{\phantom{***}}$ & $0.36^{\phantom{***}}$ & $0.20^{\phantom{***}}$ \\
      \hspace*{1em}50th percentile& $0.38^{\phantom{***}}$ & $0.16^{\phantom{***}}$ & $0.02^{\phantom{***}}$ & $-0.06^{\phantom{***}}$ & $0.36^{\phantom{***}}$ & $0.20^{\phantom{***}}$ \\
      \hspace*{1em}75th percentile& $0.39^{\phantom{***}}$ & $0.16^{\phantom{***}}$ & $0.03^{\phantom{***}}$ & $-0.05^{\phantom{***}}$ & $0.36^{\phantom{***}}$ & $0.21^{\phantom{***}}$ \\
      \hline
  \end{tabular}
\end{table}

\clearpage

\section{Conclusions}
\label{sec:conclusion}

We have proposed methods for analysing the levels and correlations of
intergenerational help and support. This involved defining a model for
the joint distribution of latent variables which represent individuals'
tendencies of giving and receiving different types of support. A
particular focus of the paper was on developing models for how the
correlations of these variables depend on covariates. A linear model was
specified for each correlation, and the estimation procedure was
designed so that it ensures that the estimated model implies positive
definite correlation matrices over the relevant range of the covariates.
This builds on literature on such `constrained' methods of estimation
for models for correlations, which are here extended to include
unit-level covariates. The estimation is carried out using a tailored
MCMC algorithm which includes an efficient Metropolis-Hastings
sub-procedure for estimating the correlation model.

The model was used to study exchanges of practical and financial support
between adult individuals and their non-coresident parents in the UK,
using survey data from the UK Household Longitudinal Study. The
modelling framework allows us to model both the conditional means and
correlations of different helping tendencies. The mean levels are
broadly positively associated with many characteristics of the
recipients that indicate higher need, and with characteristics of givers
that indicate their higher capacity to give help. These results are,
arguably, fairly encouraging about patterns of intergenerational support
in this population. Less positively, however, a very substantial
proportion of both adult individuals and their parents do not typically
give any of the kinds of help considered here. The estimated
correlations indicate  reciprocity, where those who tend to give high
levels of practical help also tend to receive much of it, and
complementarity, where those who tend to give high levels of one kind of
help (practical or financial) also tend to give much of the other kind.
This suggests a picture of a general culture of helpfulness within some
families, and general lack of it in others, rather than a sort of
zero-sum game where help would flow only in one direction at a time and
one kind of help would reduce the amount of other kinds.

This work could be extended in a number of ways in future research.
Methodologically, the proposed modelling approach for the correlation
matrix could be embedded into other covariance modelling tasks, such as
the copula model
\citep[]{hoffExtendingRankLikelihood2007,murrayBayesianGaussianCopula2013}.
The computational efficiency and mixing rates of the simple element-wise
Metropolis-Hastings MCMC sampler that was used here could perhaps be
improved by using other approaches, for example adaptive MCMC
\citep[]{haarioAdaptiveMetropolisAlgorithm2001,andrieuTutorialAdaptiveMCMC2008}
which proposes multiple parameters from an adaptive proposal in each
iteration.

Substantively, the choices of this analysis were constrained by the
available data. Although we were able to consider practical and
financial support separately, the single indicator of financial support
leaves us unable to examine varieties of it in more detail. Because the
data were collected from the adult children only, we have limited
information about their parents. Both of these limitations could be
relaxed by richer data, but collecting it would be correspondingly more
demanding. Another promising direction would be to extend these models
to longitudinal data. This would allow us, for example, to examine
questions of reciprocity and complementarity of help over time as well
as contemporaneously as was done here. These areas of further research
remain to be pursued.

\vspace{2ex}
\textbf{Acknowledgements}
This research was supported by a UK Economic and Social Research Council
(ESRC) grant ``Methods for the Analysis of Longitudinal Dyadic Data with
an Application to Inter-generational Exchanges of Family Support'' (ref.
ES/P000118/1). Additional funding for Siliang Zhang was provided by Shanghai Science and Technology Committee Rising-Star Program (22YF1411100).

\appendix

\setcounter{equation}{0}
\renewcommand{\theequation}{A\arabic{equation}}
\renewcommand{\thesection}{Appendix \Alph{section}}
\renewcommand{\thelemma}{A.\arabic{lemma}}

\section{Proofs of the propositions in Section \ref{sub:Some Useful Properties}}
\label{app:proofs_of_lemmas}

\begin{proof}[Proof of \cref{prop:C_mu1}]\leavevmode
  \begin{enumerate}[(i)]
    \item
    Let $\boldsymbol{\alpha}\in C_{\alpha,S_{X_1}}$, so that
    $\boldsymbol{\alpha}^{\top}\mathbf{X}\in C_{\rho}$ for all
    $\mathbf{X}\in S_{X_{1}}$. Since $S_{X_{2}}\subseteq S_{X_{1}}$, in
    particular, for all
    $\mathbf{X}\in S_{X_{2}}\subseteq S_{X_1}$, $\boldsymbol{\alpha}^{\top}\mathbf{X}\in C_{\rho}$, and thus $\boldsymbol{\alpha}\in
    C_{\alpha,S_{X_2}}$.
    \item
    Since $S_{X}\subseteq \text{Conv}(S_{X})$, we have
    $C_{\alpha,\text{Conv}(S_{X})}\subseteq C_{\alpha,S_{X}}$ by (i). So we just need to prove the other
    direction.
    Suppose that
    $\boldsymbol\alpha\in C_{\alpha,S_{X}}$, for any $\mathbf{X}'\in \text{Conv}(S_X)$, there exist a
    finite number of points $\mathbf{X}_1,...,\mathbf{X}_r\in S_X$ and
    $\lambda_1,\dots,\lambda_r\geq 0$, $\sum_j\lambda_j = 1$, such that
    $\mathbf{X}' = \sum_j\lambda_j \mathbf{X}_j$. We then have
    $\boldsymbol{\alpha}^{\top} \mathbf{X}' =
    \boldsymbol{\alpha}^{\top}(\sum_j\, \lambda_j \mathbf{X}_{j}) =
    \sum_j \lambda_j \,(\boldsymbol{\alpha}^{\top}\mathbf{X}_j)\in
    C_\rho$, i.e., $\boldsymbol{\alpha}\in C_{\alpha,\text{Conv}(S_X)}$, which holds because
    $\boldsymbol{\alpha}^{\top}\mathbf{X}_j\in C_\rho$ for all
    $j=1,\dots,r$, and
    $C_\rho$ is a convex set.
    \item
    $\boldsymbol{\alpha}=\mathbf{0}$ gives $\boldsymbol{\rho}=\mathbf{0}$. This
    implies the identity correlation matrix, which is in $C_{\rho}$.
    \item
    Under the further assumption stated in (iv), we can find a set
    $S_{X_*}=\{\mathbf{X}_{1},\dots,\mathbf{X}_{q}\}\subseteq
    S_{X}$ such that the matrix $\mathbf{X}_{*}=[
    \mathbf{X}_{1}, \dots,\mathbf{X}_{q}]$ is non-singular. Suppose that
    $\boldsymbol{\alpha}\in C_{\alpha,S_{X_*}}$, and let
    $\boldsymbol{\alpha}^{\top}\mathbf{X}_{*}=
    [\boldsymbol{\rho}_{1}, \dots, \boldsymbol{\rho}_{q}]$.
    Then
    $\boldsymbol{\alpha}^{\top}=[\boldsymbol{\rho}_{1}, \dots, \boldsymbol{\rho}_{q}]\mathbf{X}_{*}^{-1}$. This is bounded, because all
    elements of $\boldsymbol{\rho}_{1}, \dots, \boldsymbol{\rho}_{q}$
    are bounded (moreover, $\boldsymbol{\rho}\in [-1,1]^L$). Finally, since $S_{X_*}\subseteq S_{X}$, we have
$C_{\alpha,S_{X}}\subseteq C_{\alpha,S_{X_*}}$ by (ii), and thus $C_{\alpha,S_{X}}$ is also
bounded.
    \item Suppose that $\boldsymbol{\alpha}_{1}, \,
    \boldsymbol{\alpha}_{2} \in
    C_{\alpha,S_X}$ and that $0\leq \lambda\leq 1$. Then
      $
      (\lambda\,
      \boldsymbol{\alpha}_{1}+(1-\lambda)\,\boldsymbol{\alpha}_{2})^{\top}\mathbf{X}
 = \lambda\,
 \boldsymbol{\alpha}_{1}^{\top}\mathbf{X}+(1-\lambda)\,\boldsymbol{\alpha}_{2}^{\top}
 \mathbf{X} =\lambda\, \boldsymbol{\rho}_{1} +
 (1-\lambda)\,\boldsymbol{\rho}_{2}
      \in C_\rho,
$
    where the last equation holds since $C_\rho$ is a convex set. Thus
    $\lambda\, \boldsymbol{\alpha}_{1}+(1-\lambda)\,\boldsymbol{\alpha}_{2}\in
    C_{\alpha,S_{X}}$.
  \end{enumerate}
\end{proof}

The proof of Proposition \ref{prop:conti_interval_mu} builds on the key
ideas of \cite{johnbar2000}, extended to the case of models with
covariates that we consider.

\begin{lemma}\label{lemma:conti_rho}
Let
$\mathbf{R}(\boldsymbol{\rho})=\mathbf{R}(\rho_{l},\boldsymbol\rho_{-l})$
be the positive definite
correlation matrix defined by distinct correlations
$\boldsymbol{\rho}=(\rho_{l},\boldsymbol{\rho}_{-l}^{\top})^{\top}$. Consider
$f_{l}(\rho_l') = \vert \mathbf{R}(\rho_l', \boldsymbol\rho_{-l})\vert$
as a univariate function of $\rho_{l}'\in [-1,1]$. Then
  $f_{l}(\rho_l')$ is a quadratic function of $\rho_l'$ with negative second
  order coefficient. The matrix $\mathbf{R}_{l}=\mathbf{R}(\rho_l', \boldsymbol\rho_{-l})$ is
  positive definite if and only if $f_{l}(\rho_l')>0$.
\end{lemma}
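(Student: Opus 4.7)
The plan is to prove the two claims in turn, exploiting the fact that $\rho_l'$ enters $\mathbf{R}$ only at the symmetric pair of off-diagonal positions $(k_1,k_2)$ and $(k_2,k_1)$ currently occupied by $\rho_l$, so that $\rho_l'\mapsto\mathbf{R}(\rho_l',\boldsymbol{\rho}_{-l})$ is an affine function whose derivative in $\rho_l'$ is the rank-two symmetric matrix $\mathbf{e}_{k_1}\mathbf{e}_{k_2}^\top+\mathbf{e}_{k_2}\mathbf{e}_{k_1}^\top$.

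For the first claim, I would write
\[
\mathbf{R}(\rho_l',\boldsymbol{\rho}_{-l})=\mathbf{R}_0+\rho_l'\,(\mathbf{e}_{k_1}\mathbf{e}_{k_2}^\top+\mathbf{e}_{k_2}\mathbf{e}_{k_1}^\top),
\]
with $\mathbf{R}_0$ obtained from $\mathbf{R}(\rho_l,\boldsymbol{\rho}_{-l})$ by zeroing the $(k_1,k_2)$ and $(k_2,k_1)$ entries, and then expand $|\mathbf{R}(\rho_l',\boldsymbol{\rho}_{-l})|$ via the Leibniz formula. A permutation $\sigma$ can contribute a positive power of $\rho_l'$ only by using the $(k_1,k_2)$ entry (requiring $\sigma(k_1)=k_2$) or the $(k_2,k_1)$ entry (requiring $\sigma(k_2)=k_1$), so $f_l$ is a polynomial of degree at most two. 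The $\rho_l'^{2}$ contribution comes exactly from permutations of the form $(k_1\ k_2)\circ\sigma'$ with $\sigma'$ permuting the remaining indices; absorbing the sign of the transposition, the coefficient equals $-|\mathbf{R}^{(k_1,k_2)}|$, where $\mathbf{R}^{(k_1,k_2)}$ is the $(K-2)\times(K-2)$ principal submatrix obtained by deleting rows and columns $k_1$ and $k_2$. This submatrix does not depend on $\rho_l'$ and, being a principal submatrix of the positive definite matrix $\mathbf{R}(\rho_l,\boldsymbol{\rho}_{-l})$, is itself positive definite with strictly positive determinant. Hence $f_l$ is a genuine quadratic with strictly negative leading coefficient.

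For the second claim, the forward direction is immediate because the determinant of a positive definite matrix is positive. For the converse, set $I=\{\rho_l':\mathbf{R}(\rho_l',\boldsymbol{\rho}_{-l})\text{ is PD}\}$ and $J=\{\rho_l':f_l(\rho_l')>0\}$; the aim is $I=J$. The set $I$ is open and convex (as the preimage of the open convex PD cone under an affine map), $J$ is an open connected set (as a super-level set of a quadratic with negative leading coefficient), and both contain $\rho_l$; the forward direction gives $I\subseteq J$. For $J\subseteq I$, I would argue by continuity of eigenvalues on the connected set $J$: since $f_l>0$ on $J$, the determinant of $\mathbf{R}(\rho_l',\boldsymbol{\rho}_{-l})$ stays strictly positive, so no eigenvalue of this continuous symmetric family can cross zero as $\rho_l'$ varies through $J$; since all eigenvalues are positive at $\rho_l'=\rho_l$, they remain positive throughout $J$, so $\mathbf{R}$ is PD on all of $J$.

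The main obstacle I expect is careful sign bookkeeping in the Leibniz step, in particular verifying that the transposition $(k_1\ k_2)$ introduces the factor of $-1$ that turns the $\rho_l'^{2}$-coefficient into $-|\mathbf{R}^{(k_1,k_2)}|$ rather than $+|\mathbf{R}^{(k_1,k_2)}|$. The second subtle point is the eigenvalue-continuity argument for the converse; it relies on the standard fact that the inertia of a continuous family of symmetric matrices can change only through an eigenvalue passing through zero, a possibility that is ruled out as soon as the determinant is known to be bounded away from zero on a connected set. Combining the two pieces yields $I=J$ and completes the proof.
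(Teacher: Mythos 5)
Your proof is correct, and it splits into a part that matches the paper and a part that takes a genuinely different route. For the quadratic claim, your Leibniz-formula bookkeeping is in substance the paper's argument: both identify the coefficient of $(\rho_l')^{2}$ as $-\vert\mathbf{R}^{(k_1,k_2)}\vert$ (the paper writes it as $(-1)^{2K+1}\vert\mathbf{R}_{(l)}\vert$ after permuting $\rho_l$ into the $(K,1)$ position) and both invoke positive definiteness of that principal submatrix to get a strictly negative leading coefficient; your sign check of the transposition $(k_1\,k_2)$ is right. For the equivalence ``PD $\iff f_l>0$'', the paper argues via Sylvester's criterion: after permuting $\rho_l$ to the $(K,1)$ entry, the first $K-1$ leading principal minors do not involve $\rho_l'$ and are inherited from the positive definite $\mathbf{R}(\boldsymbol{\rho})$, so only the full determinant can fail. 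You instead use continuity of eigenvalues on the super-level set $J=\{f_l>0\}$: no eigenvalue can cross zero while the determinant stays positive on a connected set containing the PD point $\rho_l$. Both are valid. The Sylvester route is purely algebraic, decouples the two claims, and is what feeds directly into the computational machinery of Remark~1 (which exploits exactly the same positioning of $\rho_l$ at the $(K,K-1)$ entry). Your continuity argument is more portable — it works for any continuous one-parameter family through a PD point along a connected set where the determinant is positive — but note that it is not independent of the first claim: you need $J$ to be an interval, which you get only because $f_l$ is concave. That dependence is fine here since you establish concavity first, but it is worth being aware that the two halves of your proof are coupled where the paper's are not.
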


\begin{proof}[Proof of Lemma A.1]
  $\mathbf{R}_{l}$ is a symmetric matrix where $\rho_{l}'$ appears once in
  both its upper and lower triangles, so $f_{l}(\rho_l')$ is a quadratic
  function. Suppose that $\mathbf{R}_{l}$ is a $K\times K$ matrix.
Without loss of
  generality, assume that $\rho_l'$ is in its $K$th row, first column (and
  first row, $K$th column), as we can always swap both row and column
  without changing the positive definiteness and determinant value.
  Thus, the coefficient of $(\rho_{l}')^{2}$ in $f_{l}(\rho_l')$ is
  $c_{l}=(-1)^{2K+1}|\mathbf{R}_{(l)}|$, where $\mathbf{R}_{(l)}$ is the
  submatrix of $\mathbf{R}_{l}$ obtained by deleting the first and last rows
  and columns. Here $\mathbf{R}_{(l)}$ is a correlation matrix,
  obtained by deleting from $\mathbf{R}(\boldsymbol{\rho})$ all those
  correlations which involve either of the two variables whose
  correlation is $\rho_{l}$. Thus $\mathbf{R}_{(l)}$ is positive definite,
  $|\mathbf{R}_{(l)}|>0$, and $c_{l}<0$.

  $\mathbf{R}_{l}$ is positive
  definite if and only if $\vert \mathbf{R}_{lk}\vert>0$ for all
  $k=1,\ldots,K,$ where $\mathbf{R}_{lk}$ is the $k$th primary submatrix of
  $\mathbf{R}_{l}$ (Sylvester's criterion).
  Here $\rho_l'$ only affects
  $\vert \mathbf{R}_{lK}\vert=
  \vert \mathbf{R}_{l}\vert$.
  Because
  $\mathbf{R}_{l1},\dots, \mathbf{R}_{l,K-1}$ are
  equal to the corresponding submatrices of the positive definite
  correlation matrix
  $\mathbf{R}(\boldsymbol{\rho})$, we have
$\vert \mathbf{R}_{lk}\vert>0, \text{ for } k=1,\ldots,K-1$.
  So
  $\mathbf{R}_{l}=\mathbf{R}(\rho_l', \boldsymbol{\rho}_{-l})$ is positive definite if
  and only if $f_{l}(\rho_{l}')=\vert \mathbf{R}_{l}\vert>0$.
\end{proof}

\begin{proof}[Proof of \cref{prop:conti_interval_mu}]\leavevmode

From \cref{lemma:conti_rho} we know that
$\mathbf{R}_{jl}=\mathbf{R}(\rho_{l}',\boldsymbol\rho_{-l}^{(j)})$ is positive definite
if and only if $f_{jl}(\rho_{l}')=\vert
\mathbf{R}_{jl}\vert>0$. We can write
$f_{jl}(\rho_{l}') = c_{jl}(\rho_{l}')^2 + d_{jl}\rho_{l}' + e_{jl}$,
where $c_{jl} = [f_{jl}(1)+f_{jl}(-1)-2f_{jl}(0)]/2$, $d_{jl} = [f_{jl}(1) -
f_{jl}(-1)]/2$ and $e_{jl} = f_{jl}(0)$. The set of values for $\rho_{l}'$
for which $f_{jl}(\rho_{l}')>0$ is a finite interval because $c_{jl}<0$,
$f_{jl}(0)=|\mathbf{R}(0,\boldsymbol{\rho}_{-l})|>0$, and $f_{jl}(\rho_{l}')$
is a continuous function. Let us denote the roots of
$f_{jl}(\rho_{l}')=0$ by $x_{jl1}>x_{jl2}$, and define
\begin{equation}
\begin{aligned}
g_{jl} &= \frac{x_{jl1}+x_{jl2}}{2} =
-\frac{d_{jl}}{2c_{jl}},\\
h_{jl} &= \frac{x_{jl1}-x_{jl2}}{2} =
\sqrt{\frac{d_{jl}^2-4c_{jl}e_{jl}}{4c_{jl}^2}}.
\end{aligned}
\end{equation}
$\mathbf{R}_{jl}$ is positive definite when $\rho_{l}'\in (g_{jl} -
h_{jl},g_{jl}+h_{jl})$.

Consider now
$\boldsymbol{\rho}_{j}=\boldsymbol{\alpha}^{\top}\mathbf{X}_{j}$ as
specified by model (\ref{eta_corr_model}), as functions of coefficients
$\boldsymbol{\alpha}$ and covariates $\mathbf{X}_{j}$.
Consider
$\rho_{l}'=\alpha_{lm}'X_{jm} + \sum_{k\ne m} \alpha_{lk}X_{jk}$ as
implied by this model, treating $\alpha_{lm}'$ for a single
$m=1,\dots,q$ as the argument of the function and fixing all the other
elements of $\boldsymbol{\alpha}$ and $\mathbf{X}_{j}$ at the values
which defined $\boldsymbol{\rho}_{j}$.
Solving the end points of the feasible interval
of $\rho_{l}'$ for $\alpha_{lm}'$, and taking into account the
sign of $X_{jm}$ gives the feasible interval for
$\alpha_{lm}'$ with end points $a_{lm}^{(j)}$ and $b_{lm}^{(j)}$ as shown
in (\ref{eq:interval_j}) in \cref{prop:conti_interval_mu}, when
$X_{jm}\ne 0$. When $X_{jm}=0$,
$f_{jl}(\rho_{l}')$ does not depend on $\alpha_{lm}'$ and the
interval can be taken to be infinite. The interval for $\alpha_{lm}$ which is feasible for
all of the $\mathbf{X}_{j}\in S_{XT}$  is then $(a_{lm},b_{lm})=\cap_{j}
(a_{lm}^{(j)},b_{lm}^{(j)})$.

\end{proof}

\section{Details of the MCMC algorithm}
\label{app:mcmc}

\setcounter{equation}{0}
\renewcommand{\theequation}{B\arabic{equation}}

Here we describe the MCMC sampling algorithm for estimating the
structural-model parameters of the model which was introduced in Section
\ref{sec:latent_variable_model}. The general idea of this estimation was
outlined in Section \ref{sub:estimation_of_the_structural_model2}, where
we also described the sampling steps for the parameters
$\boldsymbol{\alpha}$ of the model for the correlations of
$\boldsymbol{\eta}_{i}$. As discussed there, the steps for the other
elements of the model are the same or very similar to the ones proposed
in \cite{kuhaetal22}. Their details are also given
here in order to keep this description self-contained.

The algorithm has been packed into an R \citep{rcoreteam} package [which
will be included in the supplementary materials and made available open
source on an author's GitHub page]. The algorithm was programmed in R
with core functions implemented in
{C\nolinebreak[4]\hspace{-.05em}\raisebox{.4ex}{\tiny\bf ++}}, where two
techniques are used to speed up the procedure. First, for sampling steps
with non-standard distributions, adaptive rejection sampling
\citep{gilksAdaptiveRejectionMetropolis1995} is used, exploiting
log-concavity of the posterior density functions. Second, parallel
sampling is used within each MCMC iteration where possible. The
parallelization is implemented through the OpenMP
{C\nolinebreak[4]\hspace{-.05em}\raisebox{.4ex}{\tiny\bf ++}} API
\citep{dagum1998openmp}.

Different elements of $\boldsymbol{\zeta}$ and $\boldsymbol{\psi}$ are
sampled one at a time, as scalars or vectors as appropriate. In the
notation below, those quantities that are not being sampled in a given
step are taken to be observed and fixed at their most recently sampled
values.

\vspace{1ex}
\textbf{Sampling the latent variables}: Generate values for the latent
variables
$\boldsymbol{\zeta}_{i}=(\boldsymbol{\xi}_{i}^\top,\boldsymbol{\eta}_{i}^\top)^\top$,
given the observed data and current values of the parameters
$\boldsymbol{\psi}$. This can be parallelised, because
$\boldsymbol{\zeta}_{i}$ for different  units $i$ are conditionally
independent.

(1) Sampling $\boldsymbol{\xi}$ from
$p(\boldsymbol{\xi}|\boldsymbol{\eta},\mathbf{Y},\mathbf{X},\boldsymbol{\psi})$:
Draw  $\boldsymbol{\xi}_{i}= (\xi_{Gi},\xi_{Ri})^\top$ independently for
$i=1,\dots,n$, from multinomial distributions with probabilities
\begin{eqnarray}
\lefteqn{p(\xi_{G}=j,\xi_{R}=k|\boldsymbol{\eta},\mathbf{Y}_{i},\mathbf{X}_{i},\boldsymbol{\psi})}
\label{p_xi}
\\
&\propto&
p(\mathbf{Y}_{Gi}|\xi_{G}=j,\eta_{Gi})\,
p(\mathbf{Y}_{Ri}|\xi_{R}=k,\eta_{Ri})\,
p(\xi_{G}=j,\xi_{R}=k|\mathbf{X}_{i};\boldsymbol{\psi}_{\xi})
\nonumber
\end{eqnarray}
for $j,k=0,1$, where the measurement model is specified by
(\ref{eq:meas1})--(\ref{eq:probit_linkF}) for $\mathbf{Y}_{Gi}$
and similarly for $\mathbf{Y}_{Ri}$, and the structural model for
$\boldsymbol{\xi}_{i}$ is specified by (\ref{eq:xi_dist}).

(2) Sampling $\boldsymbol{\eta}$ from
$p(\boldsymbol{\eta}|\boldsymbol{\xi},\mathbf{Y},\mathbf{X},\boldsymbol{\psi})$:
Draw  $\boldsymbol{\eta}_{i}=
(\eta_{GPi}, \eta_{RPi}, \eta_{GFi}, \eta_{RFi})^{\top}$ independently
for $i=1,\dots,n$, from
\begin{eqnarray}
p(\eta_{GP}|\boldsymbol\eta_{-GPi},\boldsymbol{\xi}_{i},\mathbf{Y}_{i},\mathbf{X}_{i},\boldsymbol{\psi})
&\propto&
p(\mathbf{Y}_{GPi}|\xi_{Gi},\eta_{GP})\,
p(\eta_{GP}|\boldsymbol\eta_{-GPi},\mathbf{X}_{i};\boldsymbol{\psi}_{\eta})
\label{p_etaGP} \\
p(\eta_{GF}|\boldsymbol\eta_{-GFi},\boldsymbol{\xi}_{i},\mathbf{Y}_{i},\mathbf{X}_{i},\boldsymbol{\psi})
&\propto& p(Y_{GFi}|\xi_{Gi},\eta_{GF})\,
p(\eta_{GF}|\boldsymbol\eta_{-GFi},\mathbf{X}_{i};\boldsymbol{\psi}_{\eta})
\label{p_etaGF} \\
p(\eta_{RP}|\boldsymbol\eta_{-RPi},\boldsymbol{\xi}_{i},\mathbf{Y}_{i},\mathbf{X}_{i},\boldsymbol{\psi})
&\propto& p(\mathbf{Y}_{RPi}|\xi_{Ri},\eta_{RP})\,
p(\eta_{RP}|\boldsymbol\eta_{-RPi},\mathbf{X}_{i};\boldsymbol{\psi}_{\eta})
\label{p_etaRP} \\
p(\eta_{RF}|\boldsymbol\eta_{-RFi},\boldsymbol{\xi}_{i},\mathbf{Y}_{i},\mathbf{X}_{i},\boldsymbol{\psi})
&\propto& p(Y_{RFi}|\xi_{Ri},\eta_{RF})\,
p(\eta_{RF}|\boldsymbol\eta_{-RFi},\mathbf{X}_{i};\boldsymbol{\psi}_{\eta}).
\label{p_etaRF}
\end{eqnarray}
Here $\boldsymbol{\eta}_{-GPi}$ denotes
$(\eta_{GFi},\eta_{RPi},\eta_{RFi})$ and $\boldsymbol{\eta}_{-GFi}$,
$\boldsymbol{\eta}_{-RPi}$ and $\boldsymbol{\eta}_{-RFi}$ are defined
similarly. The conditional distributions for the $\eta$-variables on the
right hand sides of (\ref{p_etaGP})--(\ref{p_etaRF}) are the univariate
conditional normal distributions implied by the joint normal
distribution given by (\ref{eta_mean})--(\ref{eta_corr}). The sampling
distributions depend on the values of the $\xi$-variables. When
$\xi_{Gi}=0$, in which case always $\mathbf{Y}_{Gi}=\mathbf{0}$, we have
$p(\mathbf{Y}_{GPi}|\xi_{Gi},\eta_{GP})=
p(Y_{GFi}|\xi_{Gi},\eta_{GF})=1$ and $\eta_{GPi}$ and
$\eta_{GFi}$ are drawn directly from the conditional normal
distributions. When
$\xi_{Gi}=1$,
adaptive rejection
sampling is used for $\eta_{GPi}$ and
truncated normal sampling for $\eta_{GFi}$. The sampling of
$\eta_{RPi}$ and $\eta_{RFi}$ is analogous.

\vspace{1ex}
\textbf{Sampling the parameters of the structural model}: Generate
values for the parameters $\boldsymbol{\psi}$ from their distributions
given the observed variables and current imputed values of the latent
variables $\boldsymbol{\zeta}$. These have the form of posterior
distributions of these structural parameters when both
$\boldsymbol{\zeta}$ and $\mathbf{X}$ are taken to be observed data
(this step does not depend on $\mathbf{Y}$). The prior distributions are
taken to be of the form $p(\boldsymbol{\psi})=
p(\boldsymbol{\psi}_{\xi})p(\boldsymbol{\beta})p(\boldsymbol{\sigma})p(\boldsymbol{\alpha})$,
i.e.\ independent for different blocks of parameters; their specific
forms are given below. The sampling steps for $\boldsymbol{\psi}_{\xi}$
and $\boldsymbol{\psi}_{\eta}$ do not depend on each other, so they can
be carried out in either order or in parallel.

(3) Sampling $\boldsymbol{\psi}_{\xi}=(\boldsymbol{\gamma}_{01}^{\top},
\boldsymbol{\gamma}_{10}^{\top},
\boldsymbol{\gamma}_{11}^{\top})^{\top}$ from
$p(\boldsymbol{\psi}_{\xi}|\mathbf{X},\boldsymbol{\xi}) \propto
p(\boldsymbol{\xi}|\mathbf{X};\boldsymbol{\psi}_{\xi})\,
p(\boldsymbol{\psi}_{\xi})$. This is the posterior distribution of the
coefficients of the multinomial logistic model
(\ref{eq:xi_dist}) for $\boldsymbol{\xi}_{i}$
given $\mathbf{X}_{i}$.
Define $\boldsymbol{\gamma}=(
\boldsymbol{\gamma}_{00}^{\top}, \boldsymbol{\gamma}_{01}^{\top},
\boldsymbol{\gamma}_{10}^{\top},
\boldsymbol{\gamma}_{11}^{\top})^{\top}$, where
$\boldsymbol{\gamma}_{00}=\mathbf{0}$. Let
$\gamma_{jkr}$ demote the coefficient of
$X_{jkr}$ in the model for
$p(\xi_{\chi}=j,\xi_{Ri}=k|\mathbf{X}_{i};\boldsymbol{\psi}_{\xi})$, and
$\boldsymbol{\gamma}_{-jkr}$ denote the vector obtained by omitting
$\gamma_{jkr}$ from $\boldsymbol{\gamma}$.
We take the prior distributions of each non-zero
$\gamma_{jkr}$ to be independent of each other, with
$p(\gamma_{jkr})\sim N(0,\sigma^{2}_{\gamma})$ with
$\sigma^{2}_{\gamma}=100$.
The sampling is done using conditional Gibbs
sampling, one parameter at a time.
We cycle over all $r=1\dots,Q$ and over $(j,k)=(0,1), \; (1,0),\; (1,1)$
to draw $\gamma_{jkr}$ from
\begin{equation}
p(\gamma_{jkr}|\boldsymbol{\gamma}_{-jkr},\mathbf{X},\boldsymbol{\xi})
\propto \left[
\prod_{i=1}^{n} \;
\frac{\prod_{r,s=0,1}\;\exp(\boldsymbol\gamma_{rs}^{\top}\mathbf{X}_{i})^{\delta_{ijk}}}{
\sum_{r,s=0,1}\; \exp(\boldsymbol{\gamma}_{rs}^{\top}\mathbf{X}_{i})
}
\right]\; p(\gamma_{jkr})
\label{p_psiksi}
\end{equation}
where $\delta_{ijk}=\mathbbm{1}(\xi_{Gi}=j, \xi_{Ri}=k)$. These are sampled using
adaptive rejection sampling.

(4) Sampling
$\boldsymbol{\psi}_{\eta}=(\text{vec}(\boldsymbol{\beta})^\top,
\boldsymbol{\sigma}^{\top},
\text{vec}(\boldsymbol{\alpha})^{\top})^{\top}$ from
$p(\boldsymbol{\psi}_{\eta}|\mathbf{X},\boldsymbol{\eta}) \propto
p(\boldsymbol{\eta}|\mathbf{X};\boldsymbol{\psi}_{\eta})\,
p(\boldsymbol{\psi}_{\eta})$. Here the sampling of $\boldsymbol{\alpha}$
has been described in Section
\ref{sub:estimation_of_the_structural_model2}.
For $\boldsymbol{\beta}$, the sampling is from the
posterior distribution
$p(\text{vec}(\boldsymbol{\beta})|\mathbf{X},\boldsymbol{\eta}) \propto
p(\boldsymbol{\eta}|\mathbf{X}; \boldsymbol{\psi}_{\eta})\,
p(\text{vec}(\boldsymbol{\beta}))$ where $\boldsymbol{\sigma}$ and
$\boldsymbol{\alpha}$ are regarded as known. This means that the
conditional covariance matrices
$\boldsymbol{\Sigma}_{i}=\text{cov}(\boldsymbol{\eta}_{i}|\mathbf{X}_{i};
\boldsymbol{\sigma},\boldsymbol{\alpha})$
are also known here.
We specify $p(\text{vec}(\boldsymbol{\beta}))\sim N(\mathbf{0},
\sigma^{2}_{\beta}\,\mathbf{I}_{4Q})$ with $\sigma^{2}_{\beta}=100$. The
sampling is done separately for each of the
four subvectors of $\boldsymbol{\beta}$. Let
$\boldsymbol{\beta}_{1}$ denote one of them, say
$\boldsymbol{\beta}_{1}=\boldsymbol{\beta}_{GP}$, and
$\boldsymbol{\beta}_{2}$ the rest of them, say
$\boldsymbol{\beta}_{2}=[\boldsymbol{\beta}_{RP},\,
\boldsymbol{\beta}_{GF},\, \boldsymbol{\beta}_{RF}]$, and let
$\boldsymbol{\psi}_{\eta(\beta_{1})}$ denote all the elements of
$\boldsymbol{\psi}_{\eta}$ other than
$\boldsymbol{\beta}_{1}$.
Let $\boldsymbol{\eta}_{i}$ be partitioned correspondingly into
$\eta_{1i}$ and $\boldsymbol{\eta}_{2i}$, and $\boldsymbol{\Sigma}_{i}$
into the blocks $\Sigma_{11i}$, $\boldsymbol{\Sigma}_{12i}$ and
$\boldsymbol{\Sigma}_{22i}$. The conditional distribution
$p(\eta_{1i}|\boldsymbol{\eta}_{2i}, \mathbf{X}_{i};
\boldsymbol{\psi}_{\eta})$ is then univariate normal with
mean
$\boldsymbol{\beta}_{1}^{\top}\mathbf{X}_{i}+d_{2i}$, where
$d_{2i}=\boldsymbol{\Sigma}_{12i}\boldsymbol{\Sigma}_{22i}^{-1}(\boldsymbol{\eta}_{2i}
-\boldsymbol{\beta}_{2}^{\top}\mathbf{X}_{i})$,
and variance
$\sigma^{2}_{1i}=\Sigma_{11i}-\boldsymbol{\Sigma}_{12i}\boldsymbol{\Sigma}_{22i}^{-1}
\boldsymbol{\Sigma}_{12i}^{\top}$.
Let $\mathbf{V}_{1}=\text{diag}(\sigma^{2}_{1i},\dots,\sigma^{2}_{1n})$
and
$\mathbf{e}_{1}=(\eta_{11}-d_{21},\dots,\eta_{1n}-d_{2n})^{\top}$.
The value of $\boldsymbol{\beta}_{1}$ is then sampled from
$
p(\boldsymbol{\beta}_{1}|\mathbf{X},\boldsymbol{\eta},\boldsymbol{\psi}_{\eta(\beta_{1})})
\sim
N(\mathbf{V}_{\beta_{1}}(\mathbf{X}^{\top}\mathbf{V}_{1}^{-1}\mathbf{e}_{1}),
\, \mathbf{V}_{\beta_{1}})
$
where
$\mathbf{V}_{\beta_{1}}=(\mathbf{X}^{\top}\mathbf{V}_{1}^{-1}\mathbf{X}+\mathbf{I}_{Q}/\sigma^{2}_{\beta})^{-1}$.
This is repeated with each of the four subvectors of $\boldsymbol{\beta}$
in turn in the role of $\boldsymbol{\beta}_{1}$.

For sampling of the standard deviation parameters
$\boldsymbol{\sigma}$, denote here
$\sigma_{1}=\sigma_{GP}$ and $\sigma_{2}=\sigma_{RP}$. For both of them
we use the prior distribution $\text{Inv-Gamma}(\alpha_{0},\beta_{0})$
with $\alpha_{0}=\beta_{0} = 10^{-5}$,
independently for $\sigma_{1}^{2}$ and $\sigma_{2}^{2}$. This implies
the priors $p(\sigma_{k}) \,\propto\,
\sigma_{k}^{-2\alpha_{0}-1} \exp(\beta_{0}/\sigma_{k}^{2})$ for $k=1,2$.
Denote by
$\boldsymbol{\psi}_{\eta(\sigma)}$ all other parameters in
$\boldsymbol{\psi}_{\eta}$ apart from $\sigma_{k}$.
Recall that this means that in $\boldsymbol{\Sigma}_{i}
=\mathbf{S}\,\mathbf{R}_{i}\,\mathbf{S}$, where
$\mathbf{S}=\text{diag}(\sigma_{1},\sigma_{2},1,1)$,
the correlation matrix
$\mathbf{R}_{i}=\mathbf{R}(\mathbf{X}_{i};\boldsymbol{\alpha})$ is also
treated as known here. Let
$\mathbf{e}_{i}=
(e_{i1},e_{i2},e_{i3},e_{i4})^{\top}=
\boldsymbol\eta_{i} -  \boldsymbol{\beta}^\top\mathbf{X}_i
$.
The parameter $\sigma_{k}$ is then drawn from
\begin{equation}
  \begin{aligned}
    p(\sigma_{k}|\mathbf{X}, \boldsymbol{\eta},
    \boldsymbol\psi_{\eta(\sigma)})
    &\;\propto\;    \prod_{i=1}^n
    \,p(\boldsymbol{\eta}_i|\mathbf{X}_{i}; \boldsymbol{\psi}_{\eta})
    p(\sigma_{k})
    \;\propto\; \prod_{i=1}^n \, \sigma_{k}^{-1}\exp\left(
    -\frac{1}{2}\mathbf{e}_i^{\top}\,\boldsymbol{\Sigma}_{i}^{-1}\,\mathbf{e}_{i}\right)
   p(\sigma_{k})\\
    &\; \propto \;  \sigma_{k}^{-\alpha-1}\exp\left(
    -\beta_1/\sigma_{k}^2 - 2\beta_{2}/\sigma_{k} \right),
  \end{aligned}
\end{equation}
where $\alpha = n + 2\alpha_{0},$ $\beta_1 =
\beta_{0}+(\sum_{i=1}^n e_{ik}^2 w_{kki})/2$,
 $\beta_2 = \sum_{i=1}^n\,e_{ik}(\sum_{j\neq
 k} w_{kji} e_{ij}/\sigma_j)/2$, and $w_{kji}$ is the $(k,j)$th
 element of $\mathbf{R}_i^{-1}$. Then random-walk Metropolis sampler or the
 adaptive rejection Metropolis sampler
 \citep[ARMS,][]{gilksAdaptiveRejectionMetropolis1995} can be used to sample $\sigma_{1}$ and $\sigma_{2}$.

\appendix
\section*{Appendix}
\begin{table}[!htbp]
  \centering
  \caption{Estimated parameters (measurement loadings and intercepts) of the measurement models for survey items on help given
  by respondents to their parents and on help received from the parents.}
  \vspace*{1ex}
  \begin{tabular}{lrrrr}
  \hline
   & \multicolumn{2}{l}{Giving} & \multicolumn{2}{l}{Receiving} \\
  & \multicolumn{2}{l}{practical help} & \multicolumn{2}{l}{practical help} \\
   Item& loading & intercept & loading & intercept\\
  \hline
Lifts in car                          & 1.12 & 0.83  & 1.14 & 1.54\\
  Shopping                           & 2.38 & 1.02  & 1.70 & 2.08\\
  Providing or cooking meals           & 1.24 & -0.28 & 1.15 & 1.57\\
  Basic personal needs (to parent only)         & 1.32 & -1.32 & --   & --  \\
  Looking after children (from parents only)           & --   & --    & 0.89 & 2.25\\
  Washing, ironing or cleaning     & 1.32 & -0.77 & 1.15 & 0.82\\
  Personal affairs          & 1.00 & 0.00  & 1.00 & 0.00\\
  Decorating, gardening or house repairs & 0.57 & -0.22 & 0.74 & 0.37\\[.5ex]
  \hline
  \label{t_items1}
  \end{tabular}
  \end{table}

\begin{table}
  \centering
  \caption{Estimated coefficients of the multinomial logistic model for the zero-inflation latent  classes $(\xi_{G},\xi_{R})$. The coefficients $\boldsymbol\gamma_{00}$ are fixed at $\boldsymbol{0}$ for
  identification.
  The estimates are posterior means from MCMC samples (with posterior standard
deviations in parentheses).
  }
  \label{t_model_estimates2}

  \vspace*{1ex}
  \begin{small}
  \begin{tabular}{lrrrrrcrrrrrr}\hline
  &
  \multicolumn{3}{c}{$\gamma_{jk}(\xi_{G}=j,\xi_{R}=k)$}  \\
  Covariate &$\gamma_{01}$\hspace*{4em}&$\gamma_{10}$\hspace*{4em}
  &$\gamma_{11}$\hspace*{4em}\\
  \hline
  Intercept &  $-3.98^{***}$ $(0.76)$ & $-1.73^{*\phantom{**}}$ $(1.08)$ & $1.50^{***}$ $(0.52)$ &  \\[.5ex]
  \multicolumn{2}{l}{\textbf{Respondent (child) characteristics}} & &\\
  Age (centered at 40) ($\times 10$) & $-0.13^{\phantom{***}}$ $(0.21)$ & $-0.44^{*\phantom{**}}$ $(0.24)$ & $-0.44^{***}$ $(0.09)$ &  \\[.5ex]
  Age squared ($\times 10^3$) & $-2.63^{\phantom{***}}$ $(1.72)$ & $1.34^{\phantom{***}}$ $(0.93)$ & $1.54^{***}$ $(0.46)$ &  \\[.5ex]
  Gender    &                &&                &           & \\
  ~~Female (vs.\ Male) & $1.47^{***}$ $(0.30)$ & $0.21^{\phantom{***}}$ $(0.17)$ & $0.06^{\phantom{***}}$ $(0.09)$ &  \\[.5ex]
  Partnership status              &               &   &           & \\
  ~~Partnered (vs.\ Single) & $0.13^{\phantom{***}}$ $(0.22)$ & $0.71^{***}$ $(0.20)$ & $-0.05^{\phantom{***}}$ $(0.10)$ &  \\[.5ex]
  \multicolumn{6}{l}{Age of youngest coresident child (vs.\ No children):}\\
   ~~0--1 years    & $0.51^{\phantom{***}}$ $(0.35)$ & $-0.60^{\phantom{***}}$ $(0.53)$ & $-0.01^{\phantom{***}}$ $(0.17)$ &  \\
  ~~2--4 years    & $0.50^{\phantom{***}}$ $(0.32)$ & $0.10^{\phantom{***}}$ $(0.40)$ & $0.45^{***}$ $(0.16)$ &  \\
  ~~5--10 years   & $0.43^{*\phantom{**}}$ $(0.26)$ & $-1.69^{***}$ $(0.66)$ & $0.16^{\phantom{***}}$ $(0.14)$ &  \\
  ~~11--16 years  & $-0.59^{*\phantom{**}}$ $(0.30)$ & $-0.03^{\phantom{***}}$ $(0.22)$ & $-0.32^{**\phantom{*}}$ $(0.14)$ &  \\
  ~~17-- years   & $-0.24^{\phantom{***}}$ $(0.39)$ & $-0.20^{\phantom{***}}$ $(0.24)$ & $-0.11^{\phantom{***}}$ $(0.17)$ &  \\[.5ex]
  Number of siblings (vs.\ None) & &  \\
  ~~1 & $-0.11^{\phantom{***}}$ $(0.28)$ & $-0.49^{**\phantom{*}}$ $(0.23)$ & $0.02^{\phantom{***}}$ $(0.15)$ &  \\
  ~~2     & $-0.70^{**\phantom{*}}$ $(0.27)$ & $-0.41^{*\phantom{**}}$ $(0.22)$ & $-0.23^{\phantom{***}}$ $(0.15)$ &  \\[.5ex]
  Long standing illness (vs.\ No) & $0.00^{\phantom{***}}$ $(0.23)$ & $-0.37^{*\phantom{**}}$ $(0.21)$ & $-0.27^{**\phantom{*}}$ $(0.11)$ &  \\[.5ex]
  \multicolumn{2}{l}{Employment status (vs.\ Employed)} &           &           & \\
  ~~Not employed & $-0.25^{\phantom{***}}$ $(0.21)$ & $0.36^{*\phantom{**}}$ $(0.19)$ & $-0.34^{***}$ $(0.10)$ &  \\[.5ex]
  Education
  (vs.\ Secondary or less)
 & &           &           & \\
  ~~Post-secondary     & $0.63^{***}$ $(0.18)$ & $-0.05^{\phantom{***}}$ $(0.16)$ & $0.20^{**\phantom{*}}$ $(0.09)$ &  \\[.5ex]
  Household tenure (vs.\ Renter) &&           &           & \\
  ~~Own home outright or by mortgage & $-0.23^{\phantom{***}}$ $(0.21)$ & $0.36^{*\phantom{**}}$ $(0.22)$ & $0.06^{\phantom{***}}$ $(0.10)$ &  \\[.5ex]
  Logarithm of household equivalised income& $0.37^{***}$ $(0.09)$ & $0.01^{\phantom{***}}$ $(0.11)$ & $-0.01^{\phantom{***}}$ $(0.05)$ &  \\[8pt]
  \multicolumn{3}{l}{\textbf{Parent characteristics}}         &        &\\
  Age of the oldest living parent &&&&&\\
  \hspace{1em}(centered at 70) ($\times 10$)  & $0.37^{*\phantom{**}}$ $(0.20)$ & $1.58^{***}$ $(0.42)$ & $0.17^{**\phantom{*}}$ $(0.07)$ &  \\[.5ex]
  Squared Age of the oldest parent ($\times 10^{3}$)  & $-12.72^{***}$ $(2.41)$ & $-2.91^{**\phantom{*}}$ $(1.50)$ & $-0.37^{\phantom{***}}$ $(0.37)$ &  \\[.5ex]
  At least one parent lives alone (vs.\ No) & $-0.84^{***}$ $(0.21)$ & $1.13^{***}$ $(0.17)$ & $0.25^{***}$ $(0.09)$ &  \\[8pt]
  \multicolumn{3}{l}{\textbf{Child-parent characteristics}} &&& \\
  Travel time to the nearest parent &&&&\\
  ~~More than 1 hour (vs.\ 1 hour or less)  & $-1.65^{***}$ $(0.24)$ & $-1.73^{***}$ $(0.20)$ & $-1.99^{***}$ $(0.10)$ &  \\
\hline
\multicolumn{6}{l}{{\small{The posterior credible interval excludes zero at
level 90\% (*), 95\% (**) or 99\% (***).}}}
\end{tabular}
\end{small}
\end{table}

\begin{table}
\centering
\caption{Fitted membership probabilities of the zero-inflation latent
classes $(\xi_{G},\xi_{R})$, from the estimated model
in Table S2. The fitted probabilities are averaged over parameter values in MCMC samples and over covariate values in the observed sample (for all covariates for the ``Overall'' figures, and for all but the fixed covariate for the rest. The odds ratios (OR) calculated from these averages are also shown.}
\label{t_model_xi_pred}

\vspace*{1ex}
\begin{small}
\begin{tabular}{lrrrrrrrrrrrr}\hline
&&&&&&&\multicolumn{6}{c}{Marginal probabilities}\\
Covariate &
\multicolumn{4}{c}{$p(\xi_{G}=j,\xi_{R}=k)$} & &
& \multicolumn{6}{c}{[with difference (and its SD)]} \\
setting &(0,0)&(0,1)&(1,0)&(1,1)&OR&&
\multicolumn{2}{c}{$p(\xi_{G}=1)$} &
\multicolumn{2}{c}{$p(\xi_{R}=1)$} \\
\hline
\textbf{Overall} &  $.24$ & $.09$ & $.14$ & $.53$ & $10.6$ && $.67$ & & $.62$ & \\
\multicolumn{11}{l}{\textbf{Respondent (child) characteristics}} \\
\multicolumn{11}{l}{\emph{Age}} \\
\hspace*{1em}35 years & $.22$ & $.09$ & $.16$ & $.54$ & $9.1$ && $.70$ & & $.63$ & \\
\hspace*{1em}45 years & $.28$ & $.10$ & $.14$ & $.48$ & $9.5$ && $.62$ & $-.07^{***\phantom{}}$ $(.02)$ & $.58$ & $-.05^{**\phantom{*}}$ $(.02)$ \\
\hspace*{1em}55 years & $.31$ & $.07$ & $.14$ & $.48$ & $18.2$ && $.62$ & $-.07^{**\phantom{*}}$ $(.03)$ & $.55$ & $-.08^{*\phantom{**}}$ $(.04)$ \\
\multicolumn{11}{l}{\emph{Gender}} \\
\hspace*{1em}Female& $.23$ & $.13$ & $.14$ & $.50$ & $6.5$ && $.65$ & & $.63$ & \\
\hspace*{1em}Male& $.26$ & $.04$ & $.13$ & $.56$ & $30.0$ && $.70$ & $+.05^{***\phantom{}}$ $(.02)$ & $.60$ & $-.03^{\phantom{***}}$ $(.02)$ \\
\multicolumn{11}{l}{\emph{Partnership status}} \\
\hspace*{1em}Single& $.25$ & $.08$ & $.09$ & $.57$ & $19.7$ && $.66$ & & $.65$ & \\
\hspace*{1em}Partnered& $.24$ & $.09$ & $.15$ & $.52$ & $9.1$ && $.67$ & $+.01^{\phantom{***}}$ $(.02)$ & $.61$ & $-.05^{**\phantom{*}}$ $(.02)$ \\
\multicolumn{11}{l}{\emph{Age of youngest coresident child}}\\
\hspace*{1em}No children& $.24$ & $.08$ & $.15$ & $.52$ & $10.5$ && $.68$ & $+.05^{\phantom{***}}$ $(.03)$ & $.60$ & $-.05^{\phantom{***}}$ $(.04)$ \\
\hspace*{1em}0-1 years& $.25$ & $.12$ & $.10$ & $.53$ & $13.1$ && $.63$ & & $.65$ & \\
\hspace*{1em}2-4 years& $.19$ & $.10$ & $.13$ & $.59$ & $9.7$ && $.72$ & $+.09^{***\phantom{}}$ $(.03)$ & $.68$ & $+.03^{\phantom{***}}$ $(.04)$ \\
\hspace*{1em}5-10 years& $.24$ & $.11$ & $.04$ & $.60$ & $51.0$ && $.64$ & $+.02^{\phantom{***}}$ $(.03)$ & $.71$ & $+.06^{*\phantom{**}}$ $(.04)$ \\
\hspace*{1em}11-16 years& $.29$ & $.06$ & $.18$ & $.48$ & $13.8$ && $.65$ & $+.02^{\phantom{***}}$ $(.03)$ & $.54$ & $-.11^{***\phantom{}}$ $(.04)$ \\
\hspace*{1em}17-- years& $.26$ & $.07$ & $.14$ & $.52$ & $14.6$ && $.66$ & $+.03^{\phantom{***}}$ $(.04)$ & $.59$ & $-.06^{\phantom{***}}$ $(.05)$ \\
\multicolumn{11}{l}{\emph{Number of siblings}} \\
\hspace*{1em}No sibling& $.21$ & $.11$ & $.17$ & $.51$ & $6.3$ && $.68$ & & $.62$ & \\
\hspace*{1em}1 sibling& $.23$ & $.10$ & $.12$ & $.55$ & $10.3$ && $.67$ & $-.01^{\phantom{***}}$ $(.02)$ & $.65$ & $+.03^{\phantom{***}}$ $(.03)$ \\
\hspace*{1em}2 or more& $.26$ & $.07$ & $.14$ & $.52$ & $13.0$ && $.66$ & $-.01^{\phantom{***}}$ $(.02)$ & $.59$ & $-.03^{\phantom{***}}$ $(.03)$ \\
\multicolumn{11}{l}{\emph{Longstanding illness}}\\
\hspace*{1em}Yes& $.28$ & $.10$ & $.13$ & $.49$ & $11.4$ && $.62$ & & $.60$ & \\
\hspace*{1em}No& $.24$ & $.09$ & $.14$ & $.53$ & $10.4$ && $.67$ & $+.05^{***\phantom{}}$ $(.02)$ & $.62$ & $+.02^{\phantom{***}}$ $(.02)$ \\
\multicolumn{11}{l}{\emph{Employment status}} \\
\hspace*{1em}Not employed& $.27$ & $.09$ & $.18$ & $.46$ & $8.3$ && $.65$ & & $.55$ & \\
\hspace*{1em}Employed& $.24$ & $.09$ & $.13$ & $.55$ & $11.8$ && $.67$ & $+.03^{\phantom{***}}$ $(.02)$ & $.64$ & $+.09^{***\phantom{}}$ $(.02)$ \\
\multicolumn{11}{l}{\emph{Education}} \\
\hspace*{1em}Secondary or less& $.26$ & $.07$ & $.15$ & $.52$ & $13.4$ && $.67$
&
& $.59$ & \\
\hspace*{1em}Post-secondary& $.23$ & $.10$ & $.13$ & $.54$ & $9.6$ && $.67$
& $-.00^{\phantom{***}}$ $(.01)$
& $.64$ &
$+.05^{***\phantom{}}$ $(.02)$ \\
\multicolumn{11}{l}{\emph{Household tenure}} \\
\hspace*{1em}Own home& $.24$ & $.08$ & $.15$ & $.53$ & $10.7$ && $.68$ & & $.61$ & \\
\hspace*{1em}Renter& $.25$ & $.10$ & $.12$ & $.53$ & $11.5$ && $.64$ & $-.03^{*\phantom{**}}$ $(.02)$ & $.63$ & $+.02^{\phantom{***}}$ $(.02)$ \\
\multicolumn{11}{l}{\emph{Logarithm of household equivalised income}} \\
\hspace*{1em}25 percentile& $.25$ & $.08$ & $.14$ & $.53$ & $12.0$ && $.67$ & & $.61$ & \\
\hspace*{1em}50 percentile& $.24$ & $.09$ & $.14$ & $.53$ & $10.7$ && $.67$ & $-.01^{**\phantom{*}}$ $(.00)$ & $.62$ & $+.00^{\phantom{***}}$ $(.00)$ \\
\hspace*{1em}75 percentile& $.24$ & $.10$ & $.14$ & $.52$ & $9.6$ && $.66$ & $-.01^{**\phantom{*}}$ $(.01)$ & $.62$ & $+.00^{\phantom{***}}$ $(.01)$ \\
\multicolumn{11}{l}{\textbf{Parent characteristics}}\\
\multicolumn{11}{l}{\emph{Age of the oldest living parent}}\\
\hspace*{1em}65 years& $.29$ & $.11$ & $.04$ & $.56$ & $46.3$ && $.60$ & & $.67$ & \\
\hspace*{1em}70 years& $.25$ & $.15$ & $.08$ & $.52$ & $12.6$ && $.60$ & $.00^{\phantom{***}}$ $(.01)$ & $.67$ & $+.00^{\phantom{***}}$ $(.01)$ \\
\hspace*{1em}80 years& $.22$ & $.06$ & $.20$ & $.52$ & $9.7$ && $.72$ & $+.12^{***\phantom{}}$ $(.02)$ & $.58$ & $-.10^{***\phantom{}}$ $(.03)$ \\
\multicolumn{11}{l}{\emph{At least one parent lives alone}}\\
\hspace*{1em}Yes& $.22$ & $.05$ & $.19$ & $.55$ & $14.5$ && $.74$ & & $.59$ & \\
\hspace*{1em}No& $.27$ & $.11$ & $.09$ & $.53$ & $14.0$ && $.62$ & $-.11^{***\phantom{}}$ $(.01)$ & $.64$ & $+.05^{**\phantom{*}}$ $(.02)$ \\
\multicolumn{3}{l}{\textbf{Child-parent characteristics}} \\
\multicolumn{11}{l}{\emph{Travel time to the nearest parent}}\\
\hspace*{1em}$>$ 1 hour& $.51$ & $.07$ & $.11$ & $.31$ & $22.7$ && $.42$ & & $.38$ & \\
\hspace*{1em}$\leq$ 1 hour& $.15$ & $.10$ & $.15$ & $.61$ & $6.2$ && $.76$ & $+.34^{***\phantom{}}$ $(.02)$ & $.70$ & $+.32^{***\phantom{}}$ $(.02)$ \\
\hline
\multicolumn{13}{l}{{\footnotesize{The posterior credible interval excludes zero at
level 90\% (*), 95\% (**) or 99\% (***).}}}
\end{tabular}
\end{small}
\end{table}

\clearpage
\bibliographystyle{apacite}
\bibliography{ref}

\end{document}